\pdfoutput=1
\documentclass[11pt]{article}
\usepackage[margin=1in]{geometry}
\usepackage[T1]{fontenc}
\usepackage{lmodern}
\usepackage{microtype}
\usepackage{xcolor}
\usepackage{algorithm}
\usepackage[noend]{algpseudocode}
\usepackage{ying}
\usepackage{amsmath,amssymb,amsthm,bm}
\usepackage{subcaption}
\usepackage{enumitem}
\usepackage{authblk}
\usepackage[colorlinks=true,linkcolor=blue,citecolor=blue,urlcolor=blue]{hyperref}
\usepackage[nameinlink]{cleveref}

\newtheorem{theorem}{Theorem}

\newtheorem{lemma}{Lemma}

\newtheorem{definition}{Definition}
\newtheorem{assumption}{Assumption}

\long\def\bb#1{#1}

\setcitestyle{round,semicolon}

\title{Conformal Policy Selection: When to Deploy AI Decisions for Targeted Intervention}
\author[a]{Junyu Cao}
\author[b]{Ruijiang Gao}
\author[a]{Maytal Saar-Tsechansky}
\author[c]{Mingzhang Yin}
\affil[a]{McCombs School of Business, The University of Texas at Austin, Austin, TX 78712}
\affil[b]{Naveen Jindal School of Management, The University of Texas at Dallas, Richardson, TX 75080}
\affil[c]{Warrington College of Business, University of Florida, Gainesville, FL 32611}
\makeatletter
\g@addto@macro\AB@affillist{\\[\affilsep]{\small Authors are listed in alphabetical order.}}
\makeatother
\date{}

\begin{document}
\maketitle

\begin{abstract}
\bb{Artificial-intelligence (AI)} policies are increasingly used to make consequential decisions in medicine, finance, education, and public services, yet deploying them responsibly requires knowing not only whether a policy helps \emph{on average}, but \emph{for whom}. Existing tools for off-policy evaluation and safe policy learning provide population-level guarantees, \bb{but some individuals may experience adverse treatment effects}. We recast targeted deployment as a multiple-testing problem: for each candidate unit, test whether the outcome under the AI policy would exceed a user-specified threshold, and control the false discovery rate (FDR) among the selected set. Our method, \emph{Conformal Policy Selection}, combines weighted conformal $p$-values that account for the distributional shift from behavior to target policy with a two-step selection-and-pruning procedure. We prove finite-sample FDR control and establish a \bb{doubly robust} asymptotic guarantee, where FDR remains controlled if either the outcome model or the propensity model is correctly specified. Synthetic experiments verify these guarantees across four misspecification regimes, and a semi-synthetic study built on real electronic-health-record covariates illustrates how the same FDR control can act as a deployment safety gate, bounding the harmful treatments that a globally trained policy would otherwise administer to a high-risk subpopulation where it misgeneralizes. The framework offers \bb{a principled method for safe AI deployment} in high-stakes applications.
\end{abstract}

\noindent\textbf{Keywords:} FDR Control; Policy Evaluation; AI Decisions; Safe Policy Learning

\section{Introduction}

Artificial-intelligence (AI) systems now assist or replace human judgment in consequential domains including clinical decision support, targeted advertising, credit underwriting, educational interventions, and resource allocation \citep{imbens2024causal, athey2021policy, kleinberg2018human}. Responsible deployment, however, requires addressing two distinct challenges. The first is validating whether a new policy improves outcomes on average. This objective is supported by a mature literature in off-policy evaluation \citep{athey2021policy, dudik2014doubly, chernozhukov2018double} and safe policy learning, which certifies population-level improvements with high probability \citep{thomas2015high, laroche2019safe, thomas2019preventing, benmichael2025safe}. Yet, an equally critical but conceptually distinct challenge is less explored: determining precisely \emph{for whom} the policy should be deployed\bb{.}

Population-level guarantees can hide substantial individual-level harm. A policy that raises average outcomes may nevertheless leave identifiable subgroups worse off \citep{kallus2019classifying, lei2023policy}, and the stakes of such within-policy heterogeneity are precisely what drives the move toward personalization: a treatment rule is valuable because it adapts to the individual. Deploying a universal rule is often neither practical nor ethical when the intervention carries risk. The decision-maker instead wishes to identify a \emph{subpopulation} of units for whom the new policy is confidently better than the status quo, and to deploy it selectively. This shifts the statistical problem from \emph{estimation} of a single number (the policy value $\mu(\pi)$) to a \emph{selective-inference} problem, which requires  \bb{simultaneously testing whether the policy's outcome} is greater than a deployment-relevant threshold, while controlling the rate of incorrect selections \citep{jin2023model, jin2023selection}.

Thresholding a predictive model to identify ``likely winners'' is the natural first instinct, but this approach does not provide a rigorous theoretical guarantee. \bb{Rankings from predictive models do not guarantee valid subgroup selection}, and standard \bb{machine-learning (ML)} calibration fails under the distributional shift inherent in off-policy settings, where the historical data were generated under a behavior policy $\pi_0$ and the selection must generalize to deployments of a different policy $\pi$ \citep{tibshirani2019conformal}. Individual-level conformal prediction intervals \citep{vovk2005algorithmic, angelopoulos2023conformal, lei2021conformal} yield marginally valid coverage. However, these intervals lack conditional coverage guarantees, and naive thresholding of such intervals does not control the \emph{false discovery rate} (FDR) of the resulting selection \citep{benjamini1995controlling}. Neither safe policy improvement nor \bb{decision rules based on conditional average treatment effects (CATEs)} address the multiple-testing structure directly, nor do they provide the false discovery guarantees under the joint uncertainty of outcome and propensity models.

\paragraph{Our contribution.}
We formalize targeted policy deployment as a multiple-testing problem and introduce \emph{Conformal Policy Selection}, a procedure that returns a set $\mathcal{S}\subseteq\{1,\ldots,m\}$ of test units predicted to clear a user-specified threshold under a future treatment policy (e.g., an AI policy) while guaranteeing that the expected fraction of false discoveries in $\mathcal{S}$ is at most a user-chosen level $q$. Our starting point is weighted conformal inference under covariate shift \citep{tibshirani2019conformal, jin2023model}, which we extend in two directions. First, we construct weighted conformal $p$-values tailored to the policy-evaluation setting, in which the shift from behavior to target policy has an explicit likelihood-ratio form that can be estimated from observational data. Second, we combine these $p$-values with a two-step selection-and-pruning procedure that yields \emph{finite-sample} FDR control in the calibrated-oracle setting. We further prove a \emph{\bb{doubly robust} asymptotic} guarantee that FDR control is preserved whenever \emph{either} the nonconformity score (outcome model) or the weight function (propensity model) is correctly specified. %

\section{Related Work}

\subsection{Safe Policy Learning}

Deploying a policy learned from observational data carries inherent risks. The policy may underperform the existing baseline due to model misspecification, distributional shift, or unobserved confounding. This has motivated a growing literature on safe policy learning, which seeks formal guarantees that a new policy will not cause harm before it is deployed \citep{garcia2015comprehensive, gu2024review}.

One line of work focuses on safe policy improvement, which ensures that the learned policy performs at least as well as the behavior policy with high probability. \bb{\citet{thomas2015high}} provide high-confidence lower bounds on off-policy value, enabling a practitioner to verify with \bb{probably approximately correct (PAC)} guarantees that a candidate policy meets a minimum performance threshold before deployment. \bb{\citet{laroche2019safe}} extend this idea by bootstrapping the baseline policy's value to certify improvement. The Seldonian framework of \bb{\citet{thomas2019preventing}} generalizes safe deployment to user-specified behavioral constraints, guaranteeing that undesirable outcomes occur with at most a prescribed probability. \bb{\citet{benmichael2025safe}} study safe policy learning under deterministic behavior policies, introducing a statistical safety property that limits the probability of deploying a policy worse than the existing one. Related approaches include empirical welfare maximization \citep{manski2004statistical, kitagawa2018should, athey2021policy} and doubly robust policy evaluation \citep{dudik2014doubly, chernozhukov2018double}, which improve the reliability of policy value estimates but do not directly enforce safety constraints.

In sequential and personalized settings, safe policy learning requires maintaining safety constraints throughout both the learning and deployment phases. \bb{\citet{cao2024safe} develop} a contextual constrained Markov decision process framework for personalized medical decision-making that achieves sub-linear regret while never violating safety requirements, providing the first theoretical result unifying contextual and safe reinforcement learning. \bb{\citet{lei2023policy}} address a related challenge of policy learning under biased sample selection, where the observed data may not represent the target population.

A common limitation of existing safe policy learning methods is that their safety guarantees are \emph{population-level} (i.e., bounding the average policy value, regret, or probability of overall policy improvement), and do not address which specific individuals should receive the new policy. In high-stakes domains, a policy that improves average outcomes may still harm specific subpopulations \citep{kallus2019classifying}. Our work provides a fundamentally different notion of safety. Rather than certifying the average performance of a policy, we control the FDR among individuals selected for AI-driven intervention, ensuring that the fraction of individuals who are harmed by incorrect policy assignment is bounded with finite-sample guarantees.

\subsection{Conformal Inference and Selection}

Conformal inference produces predictive intervals with distribution-free marginal coverage guarantees \citep{vovk2005algorithmic}. Recent work extends conformal inference from exchangeable data to settings with covariate shift \citep{tibshirani2019conformal} and applies it to off-policy prediction in contextual bandits \citep{Taufiq2022b}. 
\bb{Marginal coverage, however, does not by itself guarantee coverage within a selected subgroup.}

The use of conformal $p$-values together with the Benjamini--Hochberg (BH) procedure for finite-sample FDR control originates with outlier testing \citep{bates2023testing}. Conformal selection \citep{jin2023selection, Fisch2022w} builds on this idea to feed conformal $p$-values into the BH procedure for FDR-controlled subset selection. It has been applied to drug compound screening \citep{bai2025conformal} and certification of foundation model outputs \citep{Gui2024w}, though these applications usually require i.i.d. data. Our work builds on weighted conformalized selection \citep{jin2023model}, which extends conformal selection to covariate shift settings via auxiliary $p$-values. We adapt this framework to address the selective-inference problem in policy learning, generalizing the distributional shift to treatment-covariate pairs through the policy ratio. We further establish a new doubly robust asymptotic guarantee for FDR control over the individuals selected under the target policy. Concurrent work by \citet{jin2026conformal} uses conformal $p$-values to learn binary treatment rules that control the probability of individual harm relative to the control treatment for binary outcomes, whereas we take a given AI policy as fixed and control the FDR among the units selected for its deployment, testing whether each unit's realized outcome under the target policy exceeds a threshold.

\section{\bb{Preliminaries: Policy Evaluation}}
\subsection{Problem Setup}

Suppose we have access to a set of observed  tuples $\{X_i, T_i, Y_i\}_{i=1}^n$, 
where the covariates $X_i\in\mathcal{X}$, the treatment arm $T_i \in \cT:=\{0,\cdots, k-1\}$, and a scalar outcome $Y_i\in\mathbb{R}$. Using the potential outcome framework, we assume $Y_i = Y_i(T_i)$, i.e.,
\bb{the stable unit treatment value assumption (SUTVA)} \citep{rubin1980randomization}. 
We consider $Y$ as the reward, and higher reward is better. 
\bb{The observed tuples are generated} by a behavior policy $\pi_0:\cX\rightarrow \Delta^{|\cT|}$. \bb{The behavior policy may represent existing human decisions or an algorithmic policy.} 

The test samples $\{(X_{n+j},T_{n+j})\}_{j=1}^m$ have treatments assigned by a
\emph{target policy} $\pi:\mathcal{X}\to \Delta^{|\cT|}$, so
$T_{n+j} \sim \text{Categorical}(\pi(X_{n+j}))$. The outcome
$Y_{n+j}=Y_{n+j}(T_{n+j})$ is the potential outcome under treatment $T_{n+j}$,
which is \emph{unobserved}. Let $P$ denote the joint distribution of $(X,T,Y)$ when $X\sim P_X$,
$T\mid X\sim\pi_0(\cdot\mid X)$, and $Y=Y(T)$. 
The observed (calibration)
tuples are drawn i.i.d.\ from $P$. Let $Q$ denote the corresponding distribution when $X\sim Q_X$, $T\mid X\sim\pi(\cdot\mid X)$, and $Y=Y(T)$.
The test tuples are drawn i.i.d.\ from $Q$. 

\subsection{Policy Evaluation and Deployment}
A common practice in policy evaluation is to estimate the expected reward of a future policy $\pi$ using data collected under a different (behavior) policy $\pi_0$ to decide whether $\pi$ should be deployed in the future \citep{athey2021policy,imbens2024causal}. 

The estimand of interest is the average potential outcome under $\pi$, defined as
\[
\mu(\pi) := \EE_{X \sim Q_X}\left[ \EE\left[ Y(X,\pi(X)) \mid X \right] \right],
\]
where $Q_X$ denotes the marginal distribution of covariates in the target population, and $Y(X,T)$ is the potential outcome for the individual with feature $X$ under treatment $T$. Since we do not observe $Y(X, \pi(X))$ for all units, various estimators have been developed to address this missing data problem. Common approaches include direct method, which directly models $\EE[Y(X,T)|X,T]$ using the observational data, inverse propensity weighting (IPW) \citep{athey2021policy}, which reweights observed outcomes by the likelihood ratio between the target and behavior policies, and doubly robust (DR) estimators \citep{chernozhukov2018double}, which combine direct method with propensity weighting to improve efficiency and robustness. These estimators aim to provide unbiased or consistent estimates of $\mu(\pi)$ under suitable assumptions, such as correct specification of the propensity model or the outcome regression.

A key limitation of universal policy deployment is that, even if a new AI policy $\pi$ offers a marginal improvement in average reward over the baseline, it may still harm certain subpopulations. In other words, the aggregate gain can hide heterogeneous effects, where some individuals or groups experience worse outcomes under the new policy. This is particularly concerning in \textit{high-stakes applications}, where negative impacts on vulnerable or minority groups must be avoided. Therefore, rather than deploying $\pi$ for all, it is desirable to identify and serve only those individuals for whom the new policy is likely to result in a benefit, ensuring that interventions are both effective and equitable.

\section{Policy Selection with FDR Control}

Given a future policy $\pi: \mathcal{X}\rightarrow \Delta^{|\cT|}$, 
the goal is to identify a subset $\cS \subseteq \cD_\test$, where $\cD_{test}=\{n+1,\cdots, n+m\}$,  
with outcomes under the future policy $\pi$ above some known, potentially random, 
thresholds $\{c_{ j}\}_{j\in \cD_\test}$ with guaranteed probability. To draw statistical evidence for large outcomes in $\cD_\test$,  
we assume we hold an independent set of 
calibration data $\cD_\calib = \{(X_i,T_i, Y_i)\}_{i=1:n}$, whose treatments are also generated by the behavior policy $\pi_0$. 

We are interested in finding as many units $j\in \cD_\test$ as possible 
with $Y_{j}(T_j)>c_j$, while controlling the FDR 
below a \bb{prespecified} level $q\in (0,1)$. 
The FDR is the expected fraction  of false discoveries, where a false discovery is a selected unit for which $Y_j \le c_j$, i.e., the outcome turns out to fall below the threshold. 
Formally, we wish to find $\cS$ obeying 

\#\label{eq:fdr}
\fdr := \EE\Bigg[  \frac{\sum_{j\in\cD_\test} \ind\{j\in \cS,Y_{j}\leq c_{ j}\}}{\max\{1,|\cS|\}}  \Bigg] \leq q,
\#
where $q$ is the nominal target FDR level.

\subsection{Weighted Conformal Selection}

First, we introduce the weighted conformal selection.
\bb{\citet{jin2023model} propose} to choose the units $j\in \cD_{test}$ through testing the random hypotheses
\[
H_j: Y_{n+j}\leq c_j, \ j\in \cD_{test}.
\]
\subsubsection{Conformal $p$-values}
\bb{To test these hypotheses, we use the \emph{conformal $p$-values} \citep{jin2023selection} to quantify evidence against the null hypotheses.}
Suppose we are given a prediction model obtained from a training procedure that is independent of both the calibration and test samples. 
Conditioning on the training process, we treat this prediction model as fixed. Based on this model, we define a nonconformity score 
\[
V: \mathcal{X} \times \cT\times \mathcal{Y} \to \mathbb{R},
\]
which quantifies how well a response value $y$ conforms to the model’s prediction at input $x$ with treatment $t$. For instance, if the model provides a point prediction $\hat{\mu}: \mathcal{X}\times \cT \to \mathbb{R}$, one natural choice is 
$V(x,t, y) = y - \hat{\mu}(x,t)$.

If a test response $Y_{n+j}$ were observed, we could compute the nonconformity scores 
\[
V_i = V(X_i, T_i, Y_i), \quad i=1,\dots,n, 
\]
\[
V_{n+j} = V(X_{n+j},T_{n+j}, Y_{n+j}), j=1,\cdots, m.
\]
The associated conformal $p$-value is then defined as
\[
p_j^\ast = \frac{\sum_{i=1}^n \ind\{V_i < V_{n+j}\} + U_j \cdot (1+\sum_{i=1}^n \ind\{V_i = V_{n+j}\}) }{n+1},
\]
where $U_j \sim \mathrm{Unif}[0,1]$ are i.i.d.\ random variables introduced to break ties. When the test sample $(X_{n+j}, T_{n+j}, Y_{n+j})$ is exchangeable with the calibration data, it holds that $p_j^\ast \sim \mathrm{Unif}[0,1]$.

\bb{Validity of the conformal $p$-values relies on the following monotonicity property.}

\begin{definition}
A score function $V(\cdot,\cdot,\cdot): \cX\times \cT\times \cY\rightarrow \mathbb{R}$ is monotone if $V(x,t,y)\leq V(x,t,y')$ holds for any $x\in \cX$, $t\in \cT$, and any $y,y'\in \cY$ satisfying $y\leq y'$.
\end{definition}

However, in policy learning problems, the test samples receive treatments from a future policy $\pi$, which is different from the behavior policy $\pi_0$. 
We can formulate the policy difference as the density ratio between the target and behavior distributions. Specifically, the weight function is defined as

\begin{equation}\label{eq:shift-xt}
  \frac{\mathrm{d}Q}{\mathrm{d}P}(x,t,y)
  \;=\; w(x,t) \;=\; \frac{\mathrm{d}Q_X}{\mathrm{d}P_X}(x)\,
  \frac{\pi(t\mid x)}{\pi_0(t\mid x)},
\end{equation}

When the
covariate marginals agree ($Q_X=P_X$), $w(x,t)=\pi(t|x)/\pi_0(t|x)$
reduces to the policy ratio. To simplify the exposition, we assume $Q_X=P_X$ for the remainder of the paper. The general case is recovered by multiplying the policy ratio by the covariate density ratio $Q_X(x)/P_X(x)$.

This adjustment ensures that the conformal $p$-values remain valid even when the covariate distribution shifts between the calibration and test samples. 
Compute $V_i=V(X_i,T_i,Y_i)$ for $i=1,\dots,n$
and $\hat{V}_{n+j} = V(X_{n+j},T_{n+j}, c_{n+j})$ for $j=1,\dots,m$. The weighted conformal \bb{$p$-values} are 
\begin{align}
\label{eq:weighted_pval}
{p}_j = \frac{\sum_{i=1}^n w(X_i, T_i)\ind{\{V_i < \hat{V}_{n+j} \}}+  w(X_{n+j}, T_{n+j}) }{\sum_{i=1}^n w(X_i, T_i) + w(X_{n+j}, T_{n+j})}.
\end{align}
For each $j$, we compute a set of $m$ auxiliary \bb{$p$-values}
\begin{align}
 &{p}_\ell^{(j)} := \frac{
    \substack{
        \sum_{i=1}^n w(X_i, T_i)\,\mathbf{1}\{V_i < \widehat{V}_{n+\ell}\} \\
        + w(X_{n+j}, T_{n+j})\,\mathbf{1}\{\widehat{V}_{n+j} < \widehat{V}_{n+\ell}\}
    }
}{
    \sum_{i=1}^n w(X_i, T_i) + w(X_{n+j}, T_{n+j})
}, \quad \ell \neq j \nonumber \\
 &p_\ell^{(j)} := 0, \quad \ell = j.
 \label{eq:aux}
\end{align}

Next, we discuss how to leverage the constructed \bb{$p$-values} for the subgroup selection with rigorous \bb{FDR} control.  

\subsection{Conformal Policy Selection}
\bb{In this section, we introduce the Conformal Policy Selection algorithm.}
In our setting, the target policy $\pi$ differs from the behavior policy $\pi_0$. At the same time, the test response under a different treatment is unobservable. 

\subsubsection{Construct the Selection Set}

With the weight function $w(x,t)$, the score $V_i = V(X_i, T_i, Y_i)$ for each calibration data point $i$, and the approximate score $\hat{V}_{n+j} = V(X_{n+j},T_{n+j}, c_{n+j})$ for the test data $j$, we compute the conformal $p$-values $\{p_j\}_{j=1}^m$ with \Cref{eq:weighted_pval}, \bb{which satisfy null-event super-uniformity:} $\mathbb{P}(p_j \leq t, \, Y_{n+j} \leq c_{n+j}) \leq t$ for all $t \in [0,1]$. \bb{Small values of $p_j$ provide evidence against $H_j$ and support selecting the $j$-th test unit.} 

We use the auxiliary \bb{$p$-values} $\{p_\ell^{(j)}\}_{j,\ell=1}^m$ from \Cref{eq:aux} to construct the thresholds, which measure whether $p_j$ is sufficiently small for selecting the $j$-th test point. Specifically, we conduct the \bb{Benjamini--Hochberg} (BH) procedure over the set $\{p_\ell^{(j)}\}_{\ell=1}^m$ with nominal level $q$ and denote the size of the selected set as $K_j$. However, simply selecting unit $j$ with $p_j \leq q K_j / m$ may not guarantee FDR control. This is because the unit is selected not only because its \bb{$p$-value} is small, but also because $K_j$ accidentally is too large due to the correlation across $K_j$s. To address this, we sample $u \sim \text{Uniform}(0,1)$, and take the selection set as
\begin{align}
    \cS = \{j: p_j \leq q K_j / m \;\text{and}\; u \cdot K_j \leq r^* \}
    \label{eq:set}
\end{align}
where $r^* = \max \{r: \sum_{j=1}^m \mathbb{I}\{p_j \leq q K_j / m, ~~ u \cdot K_j \leq r \} \geq r \}$ is the maximal selection set size that the data can support.

\begin{algorithm}[h]
  \caption{Conformal Policy Selection}\label{alg:bh}
  \begin{algorithmic}[1]
    \Require {\raggedright \bb{Calibration data $\{(X_i, T_i, Y_i)\}_{i=1}^n$,
             test data $\{(X_{n+j},T_{n+j})\}_{j=1}^m$,
             thresholds $\{c_{n+j}\}_{j=1}^m$,
             weight function $w(\cdot)$,
             FDR target $q\in(0,1)$, monotone nonconformity score $V\colon \cX\times\cT\times\cY\to \RR$\par}}
    \State Compute $V_i = V(X_i,T_i,Y_i)$ for $i=1,\dots,n$,
           and $\hat{V}_{n+j}= V(X_{n+j},T_{n+j}, c_{n+j})$ for $j=1,\dots,m$.
    \State Construct weighted conformal \bb{$p$-values} $\{ p_j\}_{j=1}^m$ as in \Cref{eq:weighted_pval}.
    \For{$j=1,\dots,m$}
        \State Compute auxiliary \bb{$p$-values} $\{p_\ell^{(j)}\}_{\ell=1}^m$ as in \Cref{eq:aux}.
        \State Sort $\{p_\ell^{(j)}\}_{\ell=1}^m$ increasingly as $\{p_{(k)}^{(j)}\}_{k=1}^m$.
        \State Compute $K_j = \max\bigl\{k : p_{(k)}^{(j)} \le qk/m\bigr\}$.
    \EndFor
    \State Sample $u \sim \text{Uniform}(0,1)$
    \State Compute $r^*$ and the set $\cS$ as in \Cref{eq:set}.
    \Ensure Selection set $\cS$.
  \end{algorithmic}
\end{algorithm}

Algorithm \ref{alg:bh} uses the weights to compute the auxiliary \bb{$p$-values} in \bb{\Cref{eq:aux}}. In practice, these weights may or may not be known to the decision-maker. In what follows, we discuss the guarantees in two cases: when the weights are known, and when they are misspecified.

\subsection{Non-asymptotic FDR Guarantee When Weight Is Known}
Denote $Z_i = (X_i, T_i, Y_i)$ for $1 \leq i \leq n+m$, and $\tilde{Z}_{n+j} = (X_{n+j}, T_{n+j}, c_j)$ for $1 \leq j \leq m$.

\begin{theorem}\label{thm:calib_ite}
Suppose:
  \begin{enumerate}[label=(\roman*),noitemsep]
    \item $\{Z_i\}_{i=1}^n\overset{\mathrm{iid}}{\sim}P$ and
          $\{Z_{n+j}\}_{j=1}^m\overset{\mathrm{iid}}{\sim}Q$, with
          joint covariate-treatment shift~\Cref{eq:shift-xt}.
    \item $V:\mathcal{X}\times\mathcal{T}\times\mathcal{Y}\to\mathbb{R}$
          is monotone (nondecreasing in its third argument).
    \item For each $j=1,\ldots,m$, the samples in
          $\{Z_1,\ldots,Z_n,Z_{n+j}\}\cup\{\widetilde{Z}_{n+\ell}\}_{\ell\neq j}$
          are mutually independent.
  \end{enumerate}
  Then for any $q\in(0,1)$, the output $\mathcal{S}$ of Algorithm~1
  satisfies $\mathrm{FDR}\leq q$.
\end{theorem}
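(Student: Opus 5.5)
The plan follows the weighted conformalized selection argument of \citet{jin2023model}, with the pair $(X,T)$ in place of $X$. The weight $w(x,t)=\pi(t\mid x)/\pi_0(t\mid x)$ is exactly $\mathrm{d}Q/\mathrm{d}P$ on $(x,t,y)$. The outcome $Y$ enters only through the score, and the shift~\Cref{eq:shift-xt} leaves $Y\mid X,T$ unchanged. So the calibration and test tuples form a weighted-exchangeable family in the sense of \citet{tibshirani2019conformal}.

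\textbf{Step 1 (decomposition).} Write
\[
\mathrm{FDR}=\sum_{j=1}^m \EE\Bigl[\frac{\ind\{j\in\cS, Y_{n+j}\le c_j\}}{\max\{1,|\cS|\}}\Bigr].
\]
By the pruning step, $|\cS|=r^*$ and $j\in\cS$ implies $p_j\le qK_j/m$ and $uK_j\le r^*$. Following the deterministic pruning lemma of \citet{jin2023model}, I will replace $|\cS|$ by $K_j$. Since $u$ is uniform and independent of the data, conditioning on the data gives
\[
\EE\Bigl[\frac{\ind\{j\in\cS\}}{|\cS|}\Bigm|\text{data}\Bigr]\le \frac{\ind\{p_j\le qK_j/m\}}{K_j}.
\]
This yields
\[
\mathrm{FDR}\le\sum_j \EE\Bigl[\frac{\ind\{p_j\le qK_j/m,\ Y_{n+j}\le c_j\}}{K_j}\Bigr].
\]

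\textbf{Step 2 (oracle $p$-value).} Define $p_j^\ast$ as in \Cref{eq:weighted_pval}, but with $V_{n+j}=V(Z_{n+j})$ in place of $\hat V_{n+j}$. Under $H_j$, monotonicity gives $V_{n+j}\le \hat V_{n+j}$. Hence $p_j^\ast\le p_j$ on $\{Y_{n+j}\le c_j\}$, and the $j$-th summand is at most $\EE[\ind\{p_j^\ast\le qK_j/m\}/K_j]$.

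Now condition on the unordered multiset $\mathcal{E}_j=[Z_1,\dots,Z_n,Z_{n+j}]$ and on $\{\tilde Z_{n+\ell}\}_{\ell\ne j}$. Assumption (iii) makes these two independent. The auxiliary $p$-values $p^{(j)}_\ell$ use $\hat V_{n+j}$, not $V_{n+j}$, and set $p_j^{(j)}=0$. I will check that $K_j$ is then a function of $\mathcal{E}_j$ and the other test points only: it is invariant to which element of $\mathcal{E}_j$ plays the test role. This is the main obstacle. If the invariance fails as written, I would work with an oracle $K_j^\ast$ built from $V_{n+j}$ and show that $p_j^\ast\le qK_j/m$ implies $p_j^\ast\le qK_j^\ast/m$, with $K_j^\ast\ge K_j$ on that event, as in \citet{jin2023model}.

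\textbf{Step 3 (weighted uniformity).} Conditionally on $\mathcal{E}_j$, weighted exchangeability \citep{tibshirani2019conformal} says the test point equals $Z_i$ with probability proportional to $w(X_i,T_i)$. Under this law, $p_j^\ast$ (with its weighted tie-breaking mass) is super-uniform: $\mathbb{P}(p_j^\ast\le t\mid\mathcal{E}_j,\{\tilde Z_{n+\ell}\})\le t$. Since $K_j$ is fixed under this conditioning,
\[
\EE\Bigl[\frac{\ind\{p_j^\ast\le qK_j/m\}}{K_j}\Bigm|\cdot\Bigr]\le \frac{q}{m}.
\]
Summing over $j$ gives $\mathrm{FDR}\le q$.
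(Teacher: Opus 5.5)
There is a genuine gap, located at the step you flag as the main obstacle, and your fallback as stated does not close it. Your overall architecture otherwise matches the paper's: the pruning lemma reducing to $\sum_j\mathbb{E}[\mathbf{1}\{p_j\le qK_j/m,\ Y_{n+j}\le c_{n+j}\}/K_j]$, monotone oracle substitution, conditioning on the unordered multiset $\mathcal{E}_j$, and weighted exchangeability.

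The invariance you hope for does fail. In each $p^{(j)}_\ell$ the calibration points enter through their true scores $V_i$, but the test point enters through $w(X_{n+j},T_{n+j})\mathbf{1}\{\widehat V_{n+j}<\widehat V_{n+\ell}\}$ with the surrogate $\widehat V_{n+j}=V(X_{n+j},T_{n+j},c_{n+j})$. Relabeling which element of $\mathcal{E}_j$ is the test point therefore changes $K_j$. So the claim in Step~3 that $K_j$ is fixed given $(\mathcal{E}_j,\{\widetilde Z_{n+\ell}\}_{\ell\ne j})$ is false, and the fallback is not optional.

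The fallback has two defects. First, $K_j^\ast\ge K_j$ alone is the insufficient direction. To dominate $\mathbf{1}\{\cdot\}/K_j$ by $\mathbf{1}\{p_j^\ast\le qK_j^\ast/m\}/K_j^\ast$ you also need $1/K_j\le 1/K_j^\ast$, so what is required is $K_j^\ast=K_j$. Second, you propose proving it on $\{p_j^\ast\le qK_j/m\}$, after Step~2 has already discarded the observable event, and on that event it is false. Take $m=2$, $j=1$, unit weights, $n+1=100$, $q=0.1$. Suppose $V_{n+1}<\widehat V_{n+2}\le\widehat V_{n+1}$, with nine calibration scores below $V_{n+1}$ and ten below $\widehat V_{n+2}$. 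Then $p^{(1)}_2=0.10$ and $\bar p^{(1)}_2=0.11$, so $K_1=2$ and $K_1^\ast=1$, while $p_1^\ast=0.10\le qK_1/m$. Your intermediate integrand equals $1/2$, but the oracle integrand $\mathbf{1}\{p_1^\ast\le q/2\}$ equals $0$. This configuration lies outside $E_1=\{p_1\le qK_1/m,\ Y_{n+1}\le c_{n+1}\}$ only because the observable $p_1\ge 0.11$.

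The repair is the paper's Step~1: perform the $K_j$ swap on $E_j=\{p_j\le qK_j/m,\ Y_{n+j}\le c_{n+j}\}$ before discarding $p_j$.
\begin{itemize}
\item Define $\bar p^{(j)}_\ell$ by replacing $\widehat V_{n+j}$ with $V_{n+j}$ in the $j$-th indicator. Its numerator becomes $\sum_{z\in\mathcal{E}_j}w(x_z,t_z)\mathbf{1}\{V(z)<\widehat V_{n+\ell}\}$, genuinely a function of $\mathcal{E}_j$ and $\widetilde Z_{n+\ell}$. Let $K_j^\ast$ be the BH count on these values with $0$ in slot $j$.
\item Under the null, $V_{n+j}\le\widehat V_{n+j}$ gives $\bar p^{(j)}_\ell\ge p^{(j)}_\ell$, hence $K_j^\ast\le K_j$ always.
\item On $E_j$ the two vectors differ only when $V_{n+j}<\widehat V_{n+\ell}\le\widehat V_{n+j}$. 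There $\bar p^{(j)}_\ell\le p_j\le qK_j/m$; note the bound is by the observable $p_j$, since in this case $\bar p^{(j)}_\ell\ge p_j^\ast$. So every index rejected in the observable run stays below $qK_j/m$, giving $K_j^\ast\ge K_j$.
\end{itemize}
Hence $K_j^\ast=K_j$ on $E_j$, and pointwise $\mathbf{1}_{E_j}/K_j\le\mathbf{1}\{p_j^\ast\le qK_j^\ast/m\}/K_j^\ast$. Only after this swap should you condition on $(\mathcal{E}_j,\{\widetilde Z_{n+\ell}\}_{\ell\ne j})$. Under that conditioning $K_j^\ast$ is constant, and your Step~3 goes through verbatim.
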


\medskip
\noindent\textit{Discussion.}
Theorem~\ref{thm:calib_ite} provides a \emph{finite-sample}, non-asymptotic guarantee: $\mathrm{FDR}\leq q$ holds exactly for every $n$ and $m$, with no requirement on sample size or distributional regularity beyond the stated conditions. This is a direct consequence of the weighted exchangeability structure (Appendix~\ref{app:proofs}), which ensures that the oracle conformal $p$-values are conditionally super-uniform even under the joint covariate-treatment shift~\Cref{eq:shift-xt}.

Several aspects of the assumptions merit attention.
\emph{Condition~(i)} permits the calibration distribution $P$ and the target distribution $Q$ to differ in both the covariate marginal and the treatment-assignment mechanism, provided the density ratio $w(x,t)=\mathrm{d}Q/\mathrm{d}P(x,t)$ is known. When the covariate marginals coincide ($Q_X=P_X$), $w$ reduces to the policy ratio $\pi(t|x)/\pi_0(t|x)$, recovering the standard off-policy evaluation setup.
\emph{Condition~(ii)} (monotonicity of the score function) is mild: it is satisfied by any residual-type score $V(x,t,y)=y-\hat\mu(x,t)$ or any quantile-based score whose third argument is the response.
\emph{Condition~(iii)} (mutual independence) is ensured by design whenever the calibration data are collected independently of the test features and thresholds (e.g., in the common setting where the test sample is drawn after calibration is complete and the thresholds $c_j$ are prespecified).

The proof proceeds in two steps. First, the pruning rule in Algorithm~\ref{alg:bh} reduces the overall FDR to a sum of per-unit terms. Second, for each unit under the null, we bound the observed conformal $p$-value by an oracle version based on the unit's true, unobserved outcome. Because the corresponding selection threshold is determined by the calibration sample and the observed information from the other test units, it is independent of that outcome. Weighted exchangeability then implies that the oracle $p$-value is super-uniform, giving a $q/m$ bound for each unit. Summing these bounds yields $\mathrm{FDR}\leq q$.

\subsection{Asymptotic Doubly Robust Guarantee When the Weight Model Is Misspecified}

Theorem~\ref{thm:calib_ite} delivers exact, finite-sample FDR control when the density ratio $w(x,t)=\mathrm{d}Q/\mathrm{d}P(x,t)$ is known. In practice, this ratio must usually be estimated and may be misspecified (e.g., humans' decision probabilities are not known). The key question is whether such estimation bias will harm validity. The next result shows that it does not when the nonconformity score is correctly specified. In particular, if the score is the true residual $V(x,t,y)=y-f(x,t)$ under a model $Y=f(X,T)+\epsilon$ with $\epsilon$ independent of $(X,T)$, then the null score distribution is invariant to $(X,T)$. In that regime, any bounded working weight function merely reweights the calibration sample without changing the limiting null distribution of the score. In this sense, the procedure is doubly robust: a correct outcome model can absorb misspecification in the weight model.

\begin{theorem}\label{thm: asymptotic dr}
Suppose $\{Z_i\}_{i=1}^n \overset{\mathrm{iid}}{\sim} P$ and
$\{Z_{n+j}\}_{j=1}^m \overset{\mathrm{iid}}{\sim} Q$ satisfy the joint
covariate-treatment shift~\Cref{eq:shift-xt}, and condition~(iii) of
Theorem~\ref{thm:calib_ite} holds. Assume moreover that the \bb{data-generating process (DGP)} satisfies
$Y=f(X,T)+\epsilon$ with $\epsilon$ independent of $(X,T)$, and that the
score is the oracle residual $V(x,t,y)=y-f(x,t)$. Let Algorithm~\ref{alg:bh}
use any working weight function $w^\dagger:\mathcal{X}\times\mathcal{T}\to(0,\infty)$
such that
\[
  \bar{w}^{-1}\;\leq\;\frac{w^\dagger(X_i,T_i)}{w^\dagger(X_{n+j},T_{n+j})}\;\leq\; \bar{w}
\]
almost surely for all $i=1,\ldots,n$ and $j=1,\ldots,m$ (equivalently, the working weights lie in a compact interval $[1/C,C]$). Then, for any
$q\in(0,1)$, the output $\cS$ of Algorithm~\ref{alg:bh} satisfies
\[
  \limsup_{n\to\infty}\mathrm{FDR}\leq q.
\]
\end{theorem}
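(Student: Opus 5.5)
The plan is to reuse the two-step skeleton of the proof of Theorem~\ref{thm:calib_ite} and to replace the single place where the true weight $w$ was needed, namely exact super-uniformity of the oracle $p$-value, by an asymptotic statement that holds for any bounded $w^\dagger$.

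\textbf{Step 1: per-unit decomposition.} The pruning rule with the shared uniform $u$ gives, exactly as in the proof of Theorem~\ref{thm:calib_ite},
\[
\mathrm{FDR}\le \sum_{j=1}^m \mathbb{E}\Bigl[\frac{\mathbf 1\{p_j\le qK_j/m,\ Y_{n+j}\le c_j\}}{K_j}\Bigr].
\]
This step uses only the algorithmic structure and not the weights.

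For each null unit, monotonicity of $V$ gives $\hat V_{n+j}\le V_{n+j}$. Hence $p_j\ge p_j^\dagger$, where $p_j^\dagger$ is the oracle weighted $p$-value built from $V_{n+j}$ with weights $w^\dagger$.

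$K_j$ is a function of the calibration data and of $\{\tilde Z_{n+\ell}\}_{\ell\ne j}$ together with $(X_{n+j},T_{n+j})$. By condition~(iii), $K_j$ is therefore conditionally independent of $\epsilon_{n+j}$ given $\mathcal F_j:=\sigma(Z_1,\dots,Z_n,\tilde Z_{n+\ell},\ell\ne j, X_{n+j},T_{n+j})$. (If $K_j$ also depends on $\hat V_{n+j}$, it does so only through $c_j$ and the covariates, which are in $\mathcal F_j$.)

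So it suffices to show that, for each $k\in\{1,\dots,m\}$,
\[
\mathbb P\bigl(p_j^\dagger\le qk/m\mid \mathcal F_j\bigr)\le qk/m+o_P(1).
\]
Multiplying by $1/k$, summing over $k$ and over $j$ (with $m$ fixed), and applying dominated convergence (the summands are bounded by $1$) then gives $\limsup \mathrm{FDR}\le q$.

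\textbf{Step 2: conditional super-uniformity without the correct weight.} Under the model $Y=f(X,T)+\epsilon$, the oracle score is $V_i=\epsilon_i$, with $\epsilon_i$ i.i.d.\ $F_\epsilon$ and independent of $(X_i,T_i)$, for calibration and test alike. Thus, conditional on all covariate--treatment pairs $\{(X_i,T_i)\}_{i\le n}\cup\{(X_{n+j},T_{n+j})\}$, the scores $\epsilon_1,\dots,\epsilon_n,\epsilon_{n+j}$ are i.i.d. Write
\[
\hat F^\dagger(v)=\frac{\sum_i w^\dagger_i\mathbf 1\{\epsilon_i<v\}}{\sum_i w^\dagger_i+w^\dagger_{n+j}},
\]
so that $p_j^\dagger=\hat F^\dagger(\epsilon_{n+j})+w^\dagger_{n+j}/(\sum_i w_i^\dagger+w_{n+j}^\dagger)$.

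The weight ratio bound gives $w^\dagger_{n+j}/\sum_i w_i^\dagger\le \bar w/n\to 0$. It also gives $\max_i w_i^\dagger/\sum_i w^\dagger_i\le \bar w^2/n$, so a weighted Glivenko--Cantelli / Hoeffding argument, conditional on the weights, yields $\sup_v|\hat F^\dagger(v)-F_\epsilon(v^-)|\to 0$ in probability. Consequently $p_j^\dagger= F_\epsilon(\epsilon_{n+j}^-)+o_P(1)$, and $F_\epsilon(\epsilon_{n+j}^-)$ is super-uniform and independent of $\mathcal F_j$. This gives $\mathbb P(p_j^\dagger\le t\mid\mathcal F_j)\le t+\delta_n$ with $\delta_n\to0$ for each fixed $t$, which is what Step 1 requires.

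\textbf{Main obstacle.} The delicate point is conditioning. Step 2 treats the calibration empirical distribution as nearly deterministic, yet $K_j$ depends on that same calibration data. I would handle this by conditioning on the full calibration sample. On the event $\{\sup_v|\hat F^\dagger-F_\epsilon|\le\eta\}$, the only remaining randomness is $\epsilon_{n+j}$, which is independent of everything in $\mathcal F_j$, so the bound $qk/m+2\eta$ holds uniformly in $k$. The complement of that event has probability $o(1)$ and contributes at most $m$ times that to the FDR sum. Letting $\eta\downarrow0$ after $n\to\infty$ completes the argument.

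Ties in $F_\epsilon$ are handled by the strict inequality in $\hat F^\dagger$ together with super-uniformity of $F_\epsilon(\epsilon^-)$. Fixing $m$ while letting $n\to\infty$ is implicit in the statement.
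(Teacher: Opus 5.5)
Your argument is sound in its main line, but it breaks the dependence between the threshold $qK_j/m$ and the null $p$-value differently from the paper.

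\textbf{How the paper proceeds.} The paper first reuses the oracle substitution from Theorem~\ref{thm:calib_ite}. This replaces $K_j$ by $|\mathcal{R}_{j\to0}|$, the BH count on auxiliary $p$-values built with $V_{n+j}$ in place of $\widehat V_{n+j}$. It then uses the SLLN to get $\bar p_j\to F_V(V_{n+j})$ and $\bar p^{(j)}_\ell\to F_V(\widehat V_{n+\ell})$. It asserts that the BH count converges a.s.\ to some $k_j^\infty$ depending only on $\{\widetilde Z_{n+\ell}\}_{\ell\neq j}$, and moves the limit inside the expectation by bounded convergence. Independence and super-uniformity are invoked only in the limit.

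\textbf{How your route differs, and what it buys.} You keep the actual $K_j$ at finite $n$. You note that $K_j$ is $\mathcal F_j$-measurable while $\epsilon_{n+j}$ is independent of $\mathcal F_j$. A weighted Glivenko--Cantelli bound then gives $\mathbb P(p_j^\dagger\le t\mid\mathcal F_j)\le t+\eta$ uniformly in $t$ on the $\mathcal F_j$-measurable event $G_\eta=\{\sup_v|\hat F^\dagger(v)-F_\epsilon(v^-)|\le\eta\}$, whose probability tends to one. This is the more rigorous route, for two reasons. First, you never need the discontinuous BH count to converge. The paper asserts that step without justification, and it fails as stated if a limiting auxiliary $p$-value lands exactly on some $qk/m$, which can happen with positive probability when $\widehat V_{n+\ell}$ has atoms. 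Second, you get an explicit bound $\mathrm{FDR}\le q+m\eta+m\,\mathbb P(G_\eta^c)$.

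\textbf{What your route costs.} $K_j$ depends on $\widehat V_{n+j}=c_j-f(X_{n+j},T_{n+j})$. So $c_j$ must be included in $\mathcal F_j$ (your definition omits it), and $\epsilon_{n+j}$ must be independent of $c_j$. That holds for deterministic thresholds, but it is not implied by condition~(iii). The paper's oracle substitution removes $c_j$ from the bound altogether, so it tolerates arbitrary random thresholds.

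\textbf{Two corrections.} First, on a null unit monotonicity gives $V_{n+j}\le\widehat V_{n+j}$, not the reverse as you wrote; this is the direction that yields $p_j\ge p_j^\dagger$.

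Second, and more important, your closing remark on ties is false. $F_\epsilon(\epsilon^-)$ is not super-uniform when $F_\epsilon$ has an atom: a point mass gives $F_\epsilon(\epsilon^-)=0$ a.s. In fact the statement itself fails without continuity. Take $\epsilon\equiv0$, $f\equiv c$, and $w^\dagger\equiv1$.
\begin{itemize}
\item Every test unit is null, with $Y_{n+j}=c$.
\item Yet $p_j=1/(n+1)$, and every auxiliary $p$-value is $0$, so $K_j=m$.
\item Once $n+1\ge 1/q$, all units are selected and $\mathrm{FDR}=1$.
\end{itemize}
The paper's proof has the same hole: with the strict inequality, the SLLN limit is $F_V(V_{n+j}^-)$, not $F_V(V_{n+j})$. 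So state continuity of $F_\epsilon$ as an assumption rather than claiming ties are handled. With that assumption, and with $c_j$ in $\mathcal F_j$ and independent of $\epsilon_{n+j}$, your proof is complete.
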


\medskip
\noindent\textit{Discussion.}
Theorem~\ref{thm: asymptotic dr} no longer requires weights to
\bb{equal the true density ratio}. The only restriction is the bounded-ratio
condition, which rules out a single observation dominating the weighted
\bb{empirical cumulative distribution function (CDF)}. Under the residual-score model $V=\epsilon$, the score is
independent of $(X,T)$, so reweighting by any bounded $w^\dagger$ leaves the
limiting null score distribution unchanged.

This is the doubly robust feature of the procedure: if the outcome model is
correct, asymptotic FDR control survives even under a misspecified weight
model. In practice, the bounded-ratio condition can be enforced by clipping
estimated weights to a compact interval such as $[1/C,C]$ for some constant
$C>1$.

The price of this robustness is that the guarantee is asymptotic, rather than
finite-sample as in Theorem~\ref{thm:calib_ite}. Nevertheless, it covers a
practically important regime in which the score model is reliable while the
weight model cannot be estimated very accurately. 

\subsection{Asymptotic Guarantee: When Weights and Estimators Are Asymptotically Correct}

Theorems~\ref{thm:calib_ite} and~\ref{thm: asymptotic dr} each rely on one correctly specified nuisance function. Theorem~\ref{thm:calib_ite} assumes the density ratio $w$ is
known exactly, while Theorem~\ref{thm: asymptotic dr} allows an arbitrary
 weight function but requires the oracle residual score $V(x,t,y)=y-f(x,t)$. In
practice, neither the propensity score $e(x)$ nor the regression function
$f(x,t)$ is available, and both must be estimated from data. We therefore
consider the plug-in version of Algorithm~\ref{alg:bh}, in which the weights
and the test-side scores are built from estimators fitted on an independent
training sample, and ask whether FDR control is recovered as these estimators
improve as the sample size grows. The following assumption formalizes the required consistency.

\begin{assumption}\label{assump: weight and estimator}
  \bb{The DGP} satisfies $Y = f(X,T) + \epsilon$ with $\epsilon$ independent of $(X,T)$, and the
  nonconformity score is $V(x,t,y)=y-f(x,t)$.
  The algorithm uses weights $\hat{w}_N(x,t)=1/\hat{e}_N(x)$ and score
  $\hat{V}_{n+j}^N = c_{n+j}-\hat{f}_N(X_{n+j},T_{n+j})$ based on
  estimators $\hat{e}_N, \hat{f}_N$ trained on an independent dataset of sample size $N$.
  As $N\to\infty$, uniformly over the support of $(X,T)$:
  \[
    \left|\frac{1}{\hat{e}_N(X)}-\frac{1}{e(X)}\right|\to 0,\qquad
    \bigl|\hat{f}_N(X,T)-f(X,T)\bigr|\to 0 \quad \text{a.s.},
  \]
  where $e(x):=\pi_0(T=1\mid X=x)$ is the true propensity score.
\end{assumption}

\begin{theorem}\label{thm:estimated}
  Suppose Assumption~\ref{assump: weight and estimator} and
  the conditions of Theorem~\ref{thm:calib_ite} hold.
  For any $q\in(0,1)$, the output $\cS$ of Algorithm~\ref{alg:bh} satisfies
  $\limsup_{N\to\infty}\mathrm{FDR}\leq q$.
\end{theorem}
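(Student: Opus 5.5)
We prove Theorem~\ref{thm:estimated} by a perturbation argument around the oracle procedure of Theorem~\ref{thm:calib_ite}. Condition on the independent training sample, so that $\hat e_N$ and $\hat f_N$ are fixed functions. The plug-in algorithm differs from the oracle in two places. First, the weights $\hat w_N=1/\hat e_N$ replace the true ratio $w$. Second, the test-side scores $\hat V^N_{n+j}=c_{n+j}-\hat f_N(X_{n+j},T_{n+j})$ replace $\hat V_{n+j}=c_{n+j}-f(X_{n+j},T_{n+j})$. The calibration scores $V_i=Y_i-f(X_i,T_i)=\epsilon_i$ are kept as in Assumption~\ref{assump: weight and estimator}.

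The plan is to go through the same two-step reduction as in Theorem~\ref{thm:calib_ite}. The pruning step reduces the FDR to
\[
\sum_{j} \EE\!\left[\frac{\ind\{p_j\le qK_j/m,\ Y_{n+j}\le c_{n+j}\}}{K_j}\right].
\]
For each null unit $j$ we then bound $p_j$ by an oracle quantity computed from its true residual $\epsilon_{n+j}$. Write
\[
\delta_N=\sup_{x,t}|\hat f_N-f|.
\]
On the null event, $\hat V^N_{n+j}\ge \epsilon_{n+j}-\delta_N$. By monotonicity, $p_j$ is therefore at least the weighted conformal $p$-value of the shifted score $\epsilon_{n+j}-\delta_N$. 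Also $K_j$ depends only on calibration data and on $\{\tilde Z_{n+\ell}\}_{\ell\ne j}$, so it is independent of $\epsilon_{n+j}$ by condition~(iii).

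The key step is to show that this shifted weighted $p$-value is super-uniform up to a vanishing error $\eta_N$. Concretely, we aim for
\[
\mathbb P\bigl(p_j\le s,\ Y_{n+j}\le c_{n+j}\mid K_j\bigr)\le s+\eta_N
\quad\text{uniformly in } s,
\]
with $\eta_N\to 0$. Two error sources contribute to $\eta_N$.
\begin{enumerate}[label=(\alph*),noitemsep]
\item \emph{Weight error.} The uniform convergence $1/\hat e_N\to 1/e$ gives $\sup|\hat w_N/w-1|\to0$. The weighted empirical CDF built with $\hat w_N$ differs from the one built with $w$ by at most a multiplicative factor $1+o(1)$ in numerator and denominator. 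Hence the $p$-values differ by $o(1)$ uniformly. This requires $w$ bounded away from $0$ and $\infty$ (overlap), which we will assume implicitly or enforce by clipping as in Theorem~\ref{thm: asymptotic dr}.
\item \emph{Score shift.} The oracle $p$-value evaluated at $\epsilon_{n+j}$ is super-uniform by the weighted exchangeability argument of Theorem~\ref{thm:calib_ite}. Shifting the test score by $\delta_N$ changes the weighted rank by at most the weighted mass of calibration residuals in an interval of length $\delta_N$. Its expectation is at most $\sup_u [F_\epsilon(u+\delta_N)-F_\epsilon(u)]$, up to bounded weight factors. This tends to $0$ if $F_\epsilon$ is continuous, which we add as a mild regularity condition.
\end{enumerate}

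Because $p_j$ sits inside the ratio with $K_j$, we cannot simply add the errors. We plan to handle this by conditioning on $K_j=k$ and applying the bound at $s=qk/m$. Each null term is then at most
\[
\EE\!\left[\frac{qK_j/m+\eta_N}{K_j}\right]\le \frac{q}{m}+\eta_N,
\]
using $K_j\ge1$, where $K_j\ge 1$ holds because $p^{(j)}_j=0$. Summing over $j$ gives $\mathrm{FDR}\le q+m\eta_N$, with $m$ fixed.

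The main obstacle is justifying the oracle comparison in the mixed setting. The auxiliary $p$-values $p^{(j)}_\ell$ and $K_j$ are computed with estimated weights and shifted scores, while super-uniformity holds only for the true weights. Our first attempt is to notice that the per-unit argument only needs $K_j$ independent of $\epsilon_{n+j}$, whatever weights and scores $K_j$ is built from. So $K_j$ need not be approximated at all, and the entire approximation burden falls on $p_j$ alone. Finally, we take $\limsup_N$, and dominated convergence over the training sample (the FDR is bounded by $1$) removes the conditioning, giving $\limsup_{N\to\infty}\mathrm{FDR}\le q$.
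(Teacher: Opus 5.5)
\textbf{The gap.} Your central inequality $\mathbb{P}(p_j\le s,\ Y_{n+j}\le c_{n+j}\mid K_j)\le s+\eta_N$ does not hold. The reason you give for it is that the per-unit argument only needs $K_j$ to be independent of $\epsilon_{n+j}$, and that is not enough. Both $p_j$ and $K_j$ are functions of the same calibration residuals $\epsilon_1,\dots,\epsilon_n$, so conditioning on $K_j=k$ tilts the calibration sample. (Also, $K_j$ depends on $\widetilde Z_{n+j}$ through $W_j$ and $\ind\{\widehat V_{n+j}<\widehat V_{n+\ell}\}$, not only on the calibration data and $\{\widetilde Z_{n+\ell}\}_{\ell\ne j}$.) Since $n$ is fixed, this tilt does not vanish as $N\to\infty$. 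In fact the inequality fails with $\eta_N=0$, i.e.\ for the exact oracle procedure.

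Here is a counterexample. Take constant weights, $f\equiv0$, standard normal $\epsilon$, $m=2$, $n=9$, $q=0.1$, $j=1$, and fixed thresholds with $F_\epsilon(c_{n+1})=0.5$ and $F_\epsilon(c_{n+2})=0.6$.
\begin{itemize}[noitemsep]
\item We have $p^{(1)}_2=(C_2+1)/10$ with $C_2=|\{i\le n:\epsilon_i<c_{n+2}\}|$. Hence $K_1=2$ iff $C_2=0$, an event of probability $0.4^9>0$.
\item On that event $p_1=1/10=qK_1/m$, so unit $1$ is selected.
\item The null event $\{\epsilon_{n+1}\le c_{n+1}\}$ is independent of the calibration sample and has probability $0.5$.
\end{itemize}
Therefore $\mathbb{P}(p_1\le qK_1/m,\ Y_{n+1}\le c_{n+1}\mid K_1=2)=0.5>0.1$. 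The unconditional per-unit bound $\EE[\ind\{p_j\le s_j,\,Y_{n+j}\le c_{n+j}\}/K_j]\le q/m$ still holds in this example, but only after averaging over $K_j$. Your bound $\EE[(qK_j/m+\eta_N)/K_j]$ rests on a false premise, so adding the errors (a) and (b) to it proves nothing.

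\textbf{What the argument actually needs.} Theorem~\ref{thm:calib_ite} uses a subtler structure. On the null-and-selected event, $K_j$ coincides with $|\cR_{j\to0}|$, the BH count computed from auxiliary $p$-values that use the true $V_{n+j}$ in place of $\widehat V_{n+j}$. That count depends only on the unordered multiset $\cE_{z,j}$ and on $\{\widetilde Z_{n+\ell}\}_{\ell\ne j}$. Super-uniformity of $\bar p_j$ then holds conditionally on $\cE_{z,j}$ (Lemma~\ref{lem:wexch-cps}).

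To rescue a perturbation proof along your lines, you would have to:
\begin{itemize}[noitemsep]
\item redo that oracle substitution with $\hat w_N$ and the shifted test score;
\item use that, given $\cE_{z,j}$, the labeling law is proportional to the true $w$ while the $p$-value is built from $\hat w_N$, so $\sup|\hat w_N/w-1|\to0$ yields super-uniformity up to $o(1)$;
\item control the score-shift error inside that same conditioning, not just in expectation.
\end{itemize}

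\textbf{The paper's route.} The paper avoids reopening the per-unit analysis. Conditional on the training sample, the scores and weights converge, so all $p$-values, the counts $K_j^N$, and $\cS^N$ converge almost surely to their oracle counterparts, off a null set of ties. Theorem~\ref{thm:calib_ite} gives $\mathrm{FDR}(\cS)\le q$ for the limit. Bounded convergence of the false discovery proportion then yields $\lim_N\mathrm{FDR}(\cS^N)\le q$.
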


\medskip
\noindent\textit{Discussion.}
Theorem~\ref{thm:estimated} addresses the most practically relevant setting, in which \emph{both} the propensity score $e(x)$ and the conditional mean function $f(x,t)$ are unknown and must be estimated from data. The key mechanism is \emph{sample splitting}: the nuisance estimators $\hat{e}_N$ and $\hat{f}_N$ are trained on an independent sample of size $N$, separate from the calibration and test data. This independence ensures that, conditional on the training sample, the estimated score $\hat{V}_{n+j}^N = c_{n+j} - \hat{f}_N(X_{n+j},T_{n+j})$ and the estimated weights $\hat{w}_N(x,t) = 1/\hat{e}_N(x)$ are \bb{fixed functions. The asymptotic guarantee additionally requires the consistency conditions in Assumption~\ref{assump: weight and estimator}.}

Assumption~\ref{assump: weight and estimator} requires \emph{uniform almost-sure consistency} of both nuisance estimators: $\hat{e}_N(x)\to e(x)$ and $\hat{f}_N(x,t)\to f(x,t)$ uniformly as $N\to\infty$. Notably, no specific convergence rate (e.g., $N^{-1/2}$) is required, making the result compatible with flexible nonparametric or machine-learning estimators such as random forests, neural networks, or gradient boosting, provided they are consistent.

The residual score $V(x,t,y) = y - f(x,t)$ plays a dual role. First, it is nondecreasing in $y$, satisfying the monotonicity condition~(ii) of Theorem~\ref{thm:calib_ite}. Second, under the DGP $Y = f(X,T) + \epsilon$ with $\epsilon$ independent of  $(X,T)$, the oracle residuals $\{V_i = \epsilon_i\}$ are i.i.d.\ and independent of covariates. As $\hat{f}_N\to f$, the estimated residuals inherit this favorable structure, \bb{supporting asymptotic FDR control under the stated assumptions.}

We emphasize that both nuisance functions must be consistently estimated. Unlike Theorem~\ref{thm: asymptotic dr}, which tolerates weight misspecification, the present result requires convergence of $\hat{f}_N$ to the true regression function to ensure that the nonconformity scores remain well-calibrated. 

\section{Numerical Experiments}

We empirically examine the proposed method with synthetic and real-world data. We compare two procedures: Weighted BH (WBH), which applies the Benjamini--Hochberg procedure directly to the weighted conformal $p$-values, and Weighted Conformal Selection (WCS), the pruned procedure of Algorithm~\ref{alg:bh}. The finite-sample guarantee of Theorem~\ref{thm:calib_ite} applies to WCS. WBH is a simpler and typically more powerful variant that we report as a practical baseline and whose FDR control we assess empirically. We report the empirical FDR and power across a range of nominal levels $q \in [0.05, 0.95]$, averaged over multiple replications with error bars indicating one standard deviation.
The replication code is available at \url{https://github.com/ruijiang81/conformal-policy-selection}. 

\subsection{Synthetic Data}

\subsubsection{Data Generating Process}
Features are drawn i.i.d.\ as $X\sim\mathcal{N}(0,I_p)$ with $p=5$. The behavior policy is a logistic model $\pi_0(T=1\mid X)=\sigma(3\,\beta_T^\top X)$ with $\beta_T \sim \mathcal{N}(0,I_p)$ fixed across replications and propensities clipped to $[0.1,0.9]$ for overlap. Potential outcomes follow linear models $Y(t) = \beta_t^\top X + \epsilon_t$, $t\in\{0,1\}$, with $\epsilon_t\sim\mathcal{N}(0,\sigma^2)$, $\sigma=1$, and $\beta_0,\beta_1\sim\mathcal{N}(0,I_p)$. The observed outcome is $Y = (1-T)Y(0) + T\,Y(1)$, yielding tuples $(X,T,Y)$ from the behavior policy.
The AI policy $\pi$ is trained on an independent sample of $n_{\mathrm{train}}=200$ observations via linear regression of $Y$ on $(X,T)$ (the concatenated design $[X\ T]$), producing an estimate $\hat\mu(x,t)$. The policy takes $\pi(x)=\ind\{\hat\mu(x,1)>\hat\mu(x,0)\}$. Calibration and test samples of size $n=500$ and $m=100$ are drawn independently from the same distribution. Nonconformity scores on the calibration set use the true outcome model, $V_i = Y_i - \beta_{T_i}^\top X_i$, and test-side scores are defined relative to the threshold, $\widehat V_{n+j} = c_{n+j} - \hat\mu\bigl(X_{n+j},\pi(X_{n+j})\bigr)$. Likelihood-ratio weights for shift from behavior to target use $w_i = \ind\{T_i = \pi(X_i)\}/\pi_0(T_i\mid X_i)$. We set the threshold $c=0$ (select units with positive outcome under $\pi$) and average results over $400$ independent replications.

\subsubsection{FDR Control with Correctly Specified Models}

Figure~\ref{fig:synthetic} shows the empirical FDR and power when both the nonconformity scores (using the true outcome model) and the weights (using the true propensity) are correctly specified. The FDR is well-controlled below the nominal level across all $q$, consistent with the finite-sample guarantee of Theorem~\ref{thm:calib_ite}. At the same time, the power increases steadily, reaching approximately $0.80$ at $q=0.95$, indicating effective identification of individuals who benefit from the AI policy.

\begin{figure}[!t]
    \centering
    \begin{subfigure}[b]{0.48\linewidth}
        \centering
        \includegraphics[width=\linewidth]{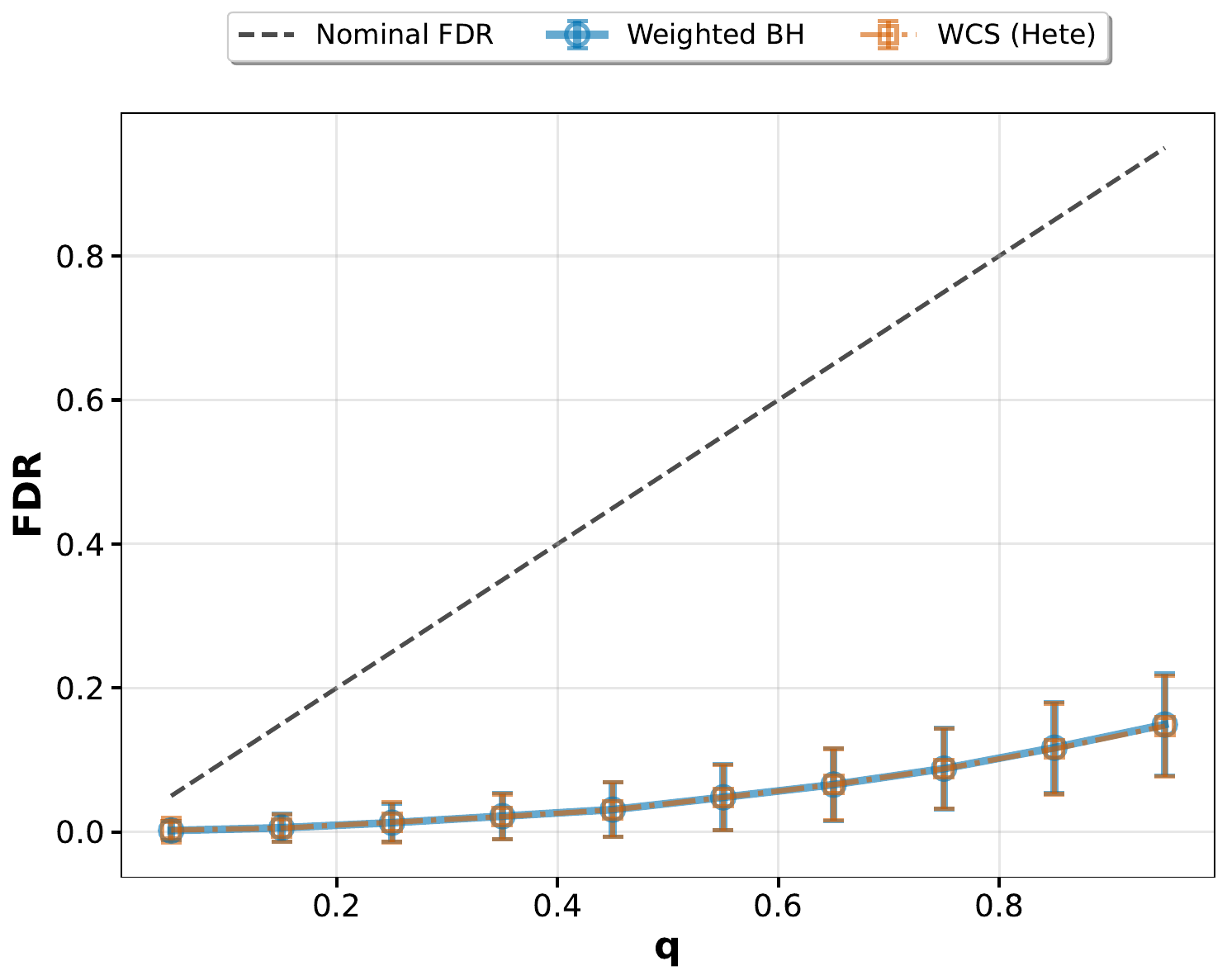}
    \end{subfigure}
    \hfill
    \begin{subfigure}[b]{0.48\linewidth}
        \centering
        \includegraphics[width=\linewidth]{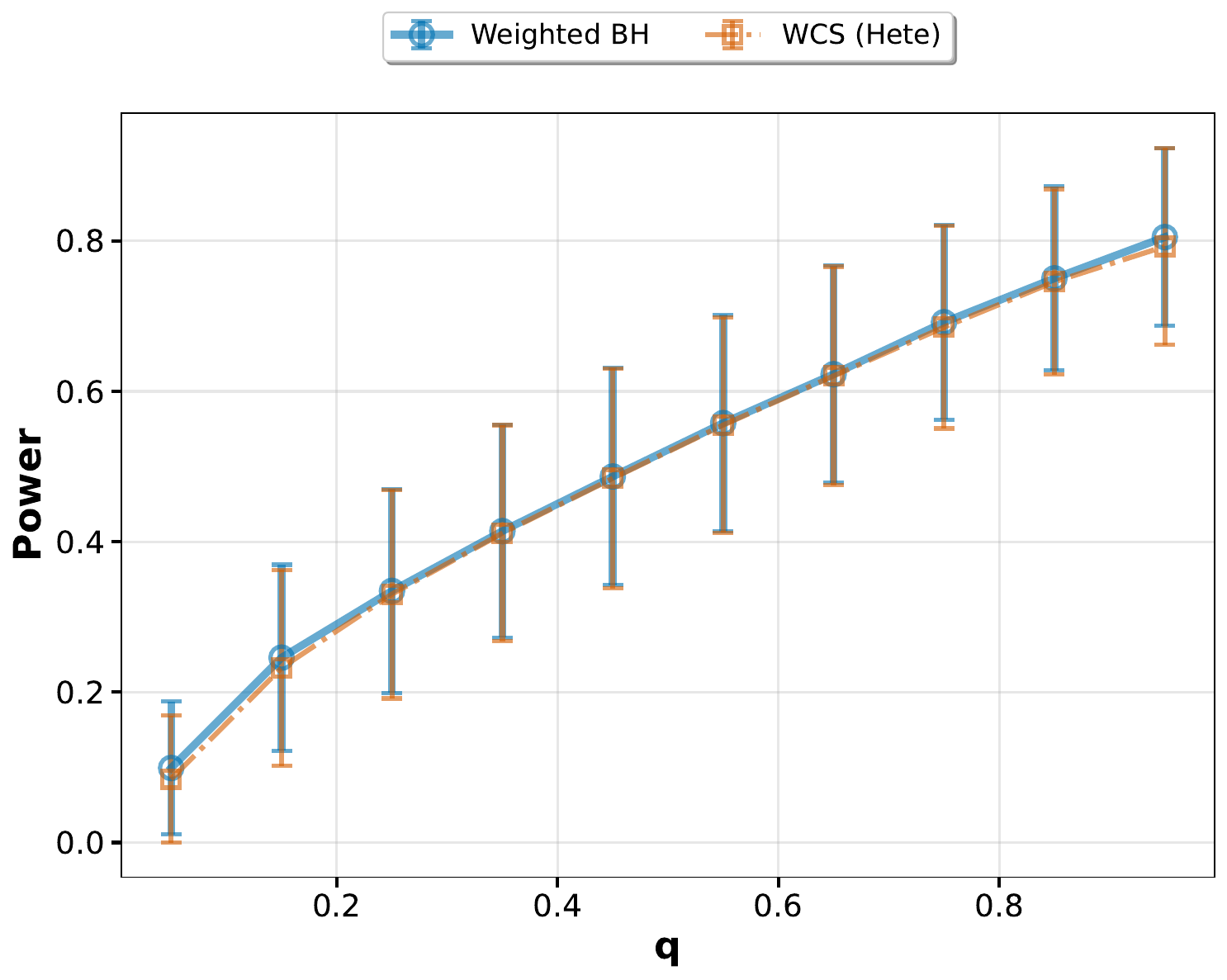}
    \end{subfigure}
    \caption{FDR and power with correctly specified scores and weights on synthetic data ($n=500$, $m=100$, 400 replications). The FDR remains well below the nominal level $q$ (dashed line) across all target levels, while power increases steadily.}
    \label{fig:synthetic}
\end{figure}

\subsubsection{Doubly Robust FDR Control}

A key property of our method is its robustness to potential model misspecification. We systematically evaluate four scenarios that share the same simulated data per replication, differing only in the score and weight functions fed to the selection procedure:
\begin{enumerate}
    \item[(1)] \textbf{Both correct}: scores use the true regression coefficients $\beta_0,\beta_1$, and weights use the true propensity $\pi_0(\cdot\mid X)$.
    \item[(2)] \textbf{Correct weights, wrong scores}: scores use misspecified coefficients $\beta_0',\beta_1'\sim\mathcal{N}(0,3^2 I_p)$ applied to the calibration and test sets, while the weights use the true propensity. Therefore, the outcome model is inaccurate.
    \item[(3)] \textbf{Wrong weights, correct scores}: scores use the true coefficients, while the weights use a misspecified propensity obtained by flipping and amplifying the logit ($\tilde\pi_0(T{=}1\mid X)=\sigma(-8\,\beta_W^\top X)$, clipped to $[0.02,0.98]$).
    \item[(4)] \textbf{Both wrong}: the misspecified scores of~(2) and the misspecified weights of~(3) together.
\end{enumerate}
Each of Scenarios (1)-(3) satisfies the hypotheses of one of our guarantees. When the weights are the true policy ratio and the residual score is applied consistently to calibration and test sets, Theorem~\ref{thm:calib_ite} applies and FDR is controlled regardless of whether the outcome model is accurate (Scenarios~1 and~2). When the residual score is the true one, Theorem~\ref{thm: asymptotic dr} applies and FDR is controlled even under misspecified weights (Scenarios~1 and~3). Only Scenario~(4), where the weights are wrong \emph{and} the score is misspecified, does not satisfy the FDR guarantee.

Figure~\ref{fig:dr_panel} presents the FDR across all four scenarios. FDR is controlled below the nominal level whenever \emph{either} the score or the weight model is correctly specified (Scenarios~1, 2, and~3). When both are misspecified (Scenario~4), the empirical FDR exceeds the nominal level, reaching about $0.30$ at $q=0.05$ and remaining elevated at small $q$.

Scenario~2 is the key \bb{doubly robust} case. The outcome model is misspecified, yet because the residual score is applied identically to calibration and test, the calibration and test scores remain weighted-exchangeable, and correct weights alone are enough for FDR control (Theorem~\ref{thm:calib_ite}). Misspecifying the outcome model reduces power but does not break validity. Control fails only when the weights are also wrong (Scenario~4), so that neither guarantee applies.

\begin{figure}[!t]
    \centering
    \includegraphics[width=\linewidth]{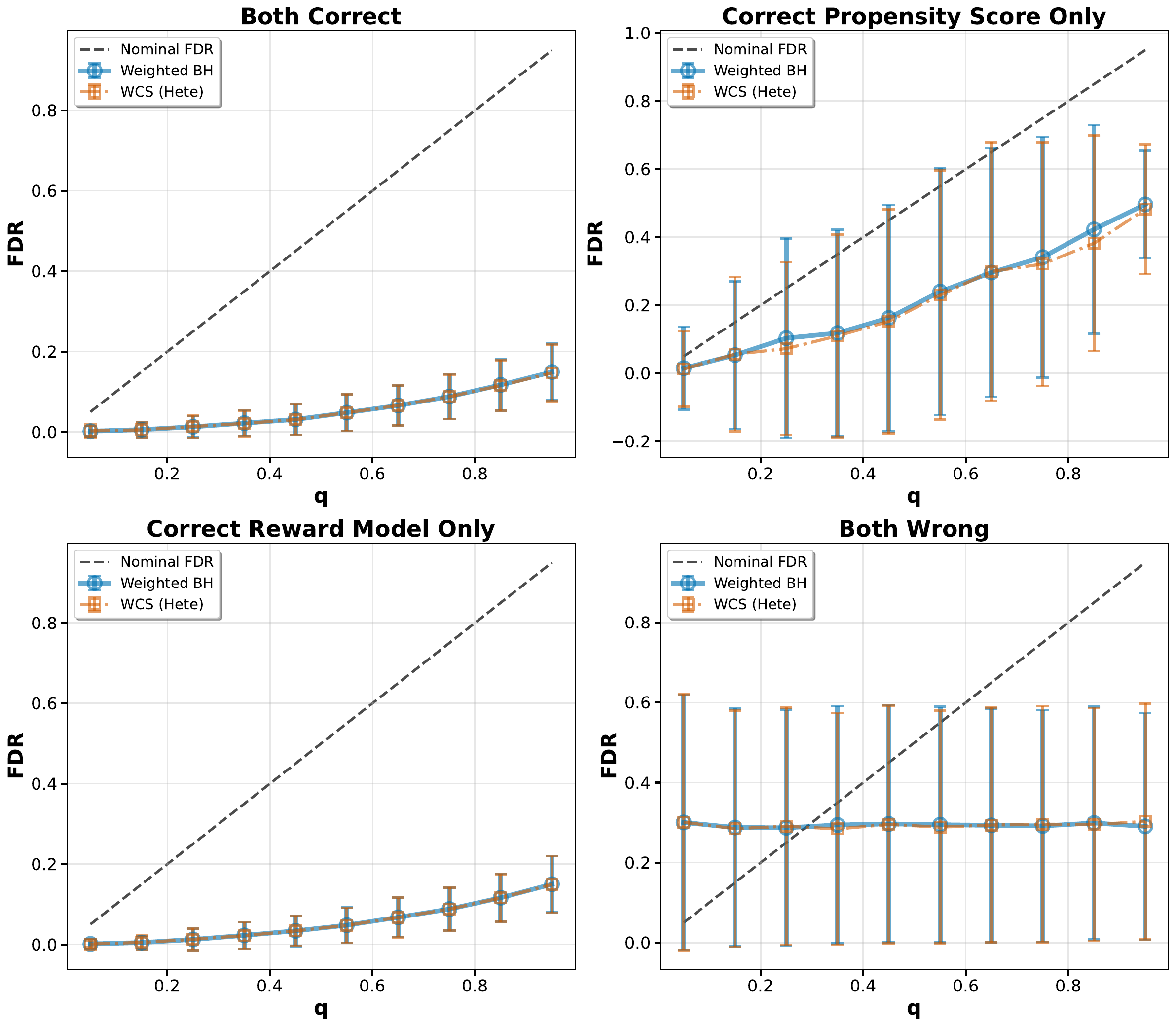}
    \caption{Doubly robust FDR control experiment. Each panel shows the empirical FDR across nominal levels $q$ under different combinations of score and weight specification. \textbf{Top left}: both correct, FDR controlled. \textbf{Top right}: correct weights, misspecified (but consistently applied) scores, FDR controlled. \textbf{Bottom left}: wrong weights, correct scores, FDR controlled. \textbf{Bottom right}: both wrong, FDR not controlled. The dashed line indicates the nominal FDR level $q$. Results averaged over 400 replications with $n=500$ calibration and $m=100$ test observations.}
    \label{fig:dr_panel}
\end{figure}

Table~\ref{tab:dr_summary} summarizes the mean FDR at representative nominal levels. Correct specification of \emph{either} the nonconformity score or the weight model is sufficient for FDR control.

\begin{table}[!t]
    \centering
    \caption{Mean empirical FDR at selected nominal levels $q$ across the four doubly robust scenarios (WBH method, 400 replications).}
    \begin{tabular}{lcccc}
        \toprule
        Scenario & $q=0.15$ & $q=0.25$ & $q=0.55$ & $q=0.95$ \\
        \midrule
        Both correct      & 0.005 & 0.013 & 0.048 & 0.149 \\
        Correct W, wrong S & 0.055 & 0.083 & 0.225 & 0.502 \\
        Wrong W, correct S & 0.005 & 0.013 & 0.048 & 0.150 \\
        Both wrong         & 0.287 & 0.300 & 0.295 & 0.297 \\
        \bottomrule
    \end{tabular}
    \label{tab:dr_summary}
\end{table}

\subsection{Real-World Data: Diabetes Prevention from Electronic Health Records}

We now apply the framework to a diabetes-prevention decision on real electronic health records, using it to illustrate a high-stakes application for FDR control.  We aim to use conformal policy selection as a \emph{deployment safety gate} that prevents the AI policy trained on the whole population from harming a subpopulation. A complementary semi-synthetic study on selecting who benefits from a job-training program is given in Appendix~\ref{sec:lalonde}.

\subsubsection{Dataset and setting}
We use the longitudinal \bb{electronic health record (EHR) dataset of \citet{kraus2024data}} from a national Israeli health insurer. There are $89{,}999$ prediabetic patients over \bb{2003--2013} ($540{,}419$ patient-years, $305$ features after encoding). The outcome $\tilde Y = 1 - Y$ indicates avoidance of type II diabetes onset (ICD-9 $250$) within a $3$-year window (``higher is better''), and the treatment $T=1$ denotes a preventive antidiabetic prescription (Glucophage or Glucomin). We study the clinically high-risk subgroup with age $\ge 60$ and BMI $\ge 30$, for which $n = 5{,}993$ patient-years and $8.3\%$ are treated.

In this dataset, antidiabetics are prescribed reactively, especially when a patient is already deteriorating, so $\Pr(\text{onset}\mid T{=}1)=97\%$ and $\Pr(\text{onset}\mid T{=}0)=31\%$. This confounding by indication is not removable by covariate adjustment. Fitting an S-learner, T-learner, X-learner, or causal forest yields a \emph{negative} estimated treatment effect for every patient, and even within propensity-matched strata the treated avoid onset far less often (Appendix~\ref{sec:si-confounding}). We treat the fitted values as the true potential outcomes. \bb{Therefore, a benefit-maximizing policy treats no one.} \bb{Because counterfactual outcomes are unobserved, we do not draw conclusions about the actual benefits of antidiabetic prescriptions; our conclusions concern only the semi-synthetic setup.} 
The standard clinical rule of treating the \bb{highest-risk decile is} a policy driven by a plain onset-probability model $\widehat{\Pr}(\text{onset}\mid x)$, and flags about $10\%$ of this subgroup for a treatment. Due to our semi-synthetic setup, this policy is harmful.  

\subsubsection{Conformal selection as a deployment gate}
We deploy the AI policy $\pi$ only to the units that the conformal policy selection method chooses. Every other unit keeps the status quo. FDR control at level \(q\) bounds the expected proportion of selected deployments whose outcomes fall at or below the prespecified threshold by \(q\), with this proportion defined as zero when no units are selected. Real counterfactuals are unobservable, so the true FDR cannot be measured on the raw data. We therefore evaluate against a semi-synthetic ground truth, in which a causal forest fit on the real covariates supplies the conditional treatment effect $\tau(x)$ and a random forest supplies the control surface $\mu_0(x)$, giving oracle potential-outcome means $\mu_0(x)$ and $\mu_1(x)=\mu_0(x)+\tau(x)$ from which we draw continuous outcomes $Y(t)=\mu_t(x)+\varepsilon$. This provides both counterfactuals for every unit, so the reported FDR and the count of deployed harmful treatments are exact. The nonconformity score uses a random-forest S-learner decoupled from the risk-based policy, and the weights are the estimated policy-shift likelihood ratios. Full details are in Appendix~\ref{sec:si-diabetes}.

\subsubsection{Results and implication}
Figure~\ref{fig:diabetes} reports the outcome over $20$ replications. The empirical FDR is controlled below the nominal level throughout. It equals $0.046$ at the level where the procedure first selects any unit ($q=0.75$) and reaches the trivial ``select-all'' floor of $0.50$ only as $q\to 1$. 

The demonstrated safety property does not require knowing that the policy is harmful for this subgroup a priori. Therefore, when the conformal policy selection method is applied to a subgroup where treatment helps, the same policy and gate would certify and deploy beneficial treatments. In this setting, where treatment harms, the gate refuses to treat them for small $q$. FDR control thus delivers safety for the subgroup. \bb{The parameter $q$ provides an interpretable budget for the expected false discovery proportion.}

\begin{figure}[!t]
    \centering
    \includegraphics[width=0.92\linewidth]{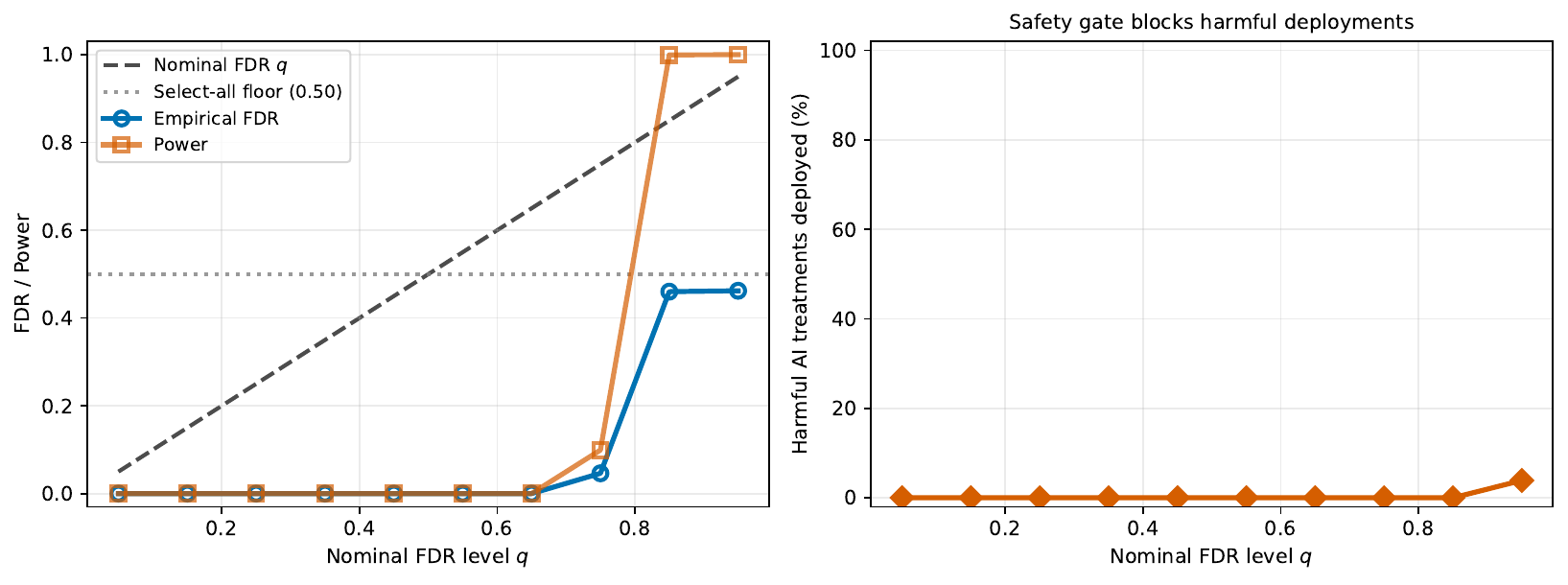}
    \caption{Conformal selection as a deployment safety gate on the EHR diabetes cohort (age $\ge 60$, BMI $\ge 30$, $n=5{,}993$, semi-synthetic causal-forest ground truth, $20$ replications). \textbf{Left}: the empirical FDR (blue) stays below the nominal level $q$ (dashed) across the whole range, reaching the select-all floor ($0.50$, dotted) only as $q\to1$, with power shown in orange. \textbf{Right}: the percentage of the AI policy's intended (harmful) treatments that are actually deployed through the gate, which is zero for all $q \le 0.85$ even though the gate certifies $93\%$ of the batch for deployment at $q=0.85$.}
    \label{fig:diabetes}
\end{figure}

\section{Discussion}

We introduced \emph{Conformal Policy Selection}, a framework that formulates targeted AI deployment as a multiple-testing problem with finite-sample FDR control over selected units. By pairing weighted conformal $p$-values with a two-step selection-and-pruning procedure, the method provides both a finite-sample guarantee (\Cref{thm:calib_ite}) and a \bb{doubly robust} asymptotic guarantee (\Cref{thm: asymptotic dr}) under which validity persists whenever \emph{either} the outcome model or the weight model is correctly specified. Rather than deploying a policy universally and hoping that the average gain does not introduce individual harm, a decision-maker can act only on the selected units with an explicit bound on incorrect selections.

An interesting feature of our formulation is that the selection target is the \emph{realized outcome under the deployed policy}, not the individual causal effect. \bb{We ask whether} a unit's outcome under $\pi$ clears the threshold $c$, which is arguably the operationally relevant quantity in most high-stakes settings, e.g.,  enrolling a patient, extending a loan, or allocating a scarce resource. Causal reasoning enters through the policy training stage and is deliberately separated from the selection stage, where we provide \bb{a distribution-free FDR guarantee}. When the causal effect itself is the quantity of interest, the same method applies by redefining the nonconformity score accordingly, a natural extension for future work.
\bibliographystyle{plainnat}
\bibliography{references}

\clearpage
\appendix

\section{Proofs}\label{app:proofs}

\subsection{Notations and preliminary results.}

We use the following notation throughout the proofs.

For $i=1,\ldots,n$, let $V_i = V(X_i,T_i,Y_i)$. For $j=1,\ldots,m$, let
$\widehat{V}_{n+j} = V(X_{n+j},T_{n+j},c_{n+j})$ and
$V_{n+j} = V(X_{n+j},T_{n+j},Y_{n+j})$. Also define
\[
  W_j := \sum_{i=1}^n w(X_i,T_i)+w(X_{n+j},T_{n+j}).
\]
Write $Z_{n+j}=(X_{n+j},T_{n+j},Y_{n+j})$ for the full test tuple and
$\widetilde{Z}_{n+j}=(X_{n+j},T_{n+j},c_{n+j})$ for its observable surrogate.
The true null set is
\[
  \cH_0 = \{j \in \cD_\test : Y_{n+j} \leq c_{n+j}\}.
\]
A unit $j$ is a \emph{false discovery} if $j\in\cS$ and $j\in\cH_0$.

Recall that the weighted conformal p-values are
\begin{equation}\label{eq:pj-cps}
  p_j \;=\;
  \frac{\displaystyle\sum_{i=1}^n w(X_i,T_i)\,\one\{V_i<\widehat{V}_{n+j}\}
        + w(X_{n+j},T_{n+j})}{W_j}.
\end{equation}

The oracle p-values, used only in the analysis, are defined as
\begin{equation}\label{eq:pj-cps-oracle}
  \bar{p}_j \;=\;
  \frac{\displaystyle\sum_{i=1}^n w(X_i,T_i)\,\one\{V_i<V_{n+j}\}
        + w(X_{n+j},T_{n+j})}{W_j}.
\end{equation}
That is, $\widehat{V}_{n+j}$ is replaced by the oracle score $V_{n+j}$.

On the null event $\{Y_{n+j}\leq c_{n+j}\}$, monotonicity of $V(x,t,\cdot)$
gives $V_{n+j}\leq\widehat{V}_{n+j}$, hence $\bar{p}_j\leq p_j$.

For theoretical purposes, define the randomized oracle p-value using an
independent $U_j\sim\mathrm{Unif}[0,1]$:
\begin{align}\label{eq:pjstar-cps}
  p_j^* \;=\;  \frac{1}{W_j}\!(
    \sum_{i=1}^n w(X_i,T_i)\,\one\{V_i<V_{n+j}\} +
    U_j\!(w(X_{n+j},T_{n+j})
    )).
\end{align}
By construction $p_j^*\leq\bar{p}_j$ always.

For $\ell\neq j$, define the auxiliary p-values
\begin{equation}\label{eq:aux-cps}
  \begin{aligned}
    p_\ell^{(j)}
    &:= \frac{1}{W_j}\Bigl(
      \sum_{i=1}^n w(X_i,T_i)\,\one\{V_i<\widehat{V}_{n+\ell}\} \\
    &\hspace{4.2em}
      + w(X_{n+j},T_{n+j})\,\one\{\widehat{V}_{n+j}<\widehat{V}_{n+\ell}\}
    \Bigr),
  \end{aligned}
\end{equation}
and set $p_j^{(j)}:=0$. The oracle auxiliary p-value uses $V_{n+j}$ instead of
$\widehat{V}_{n+j}$:
\begin{equation}\label{eq:aux-cps-oracle}
  \begin{aligned}
    \bar{p}_\ell^{(j)}
    &:= \frac{1}{W_j}\Bigl(
      \sum_{i=1}^n w(X_i,T_i)\,\one\{V_i<\widehat{V}_{n+\ell}\} \\
    &\hspace{4.2em}
      + w(X_{n+j},T_{n+j})\,\one\{V_{n+j}<\widehat{V}_{n+\ell}\}
    \Bigr),
  \end{aligned}
\end{equation}
for $\ell\neq j$.

Define $\widehat{\cR}_{j\to0}$ as the BH rejection set at level $q$ applied to
\[
  (p_1^{(j)},\ldots,p_{j-1}^{(j)},0,p_{j+1}^{(j)},\ldots,p_m^{(j)}).
\]
Set $K_j:=|\widehat{\cR}_{j\to0}|$ and $s_j:=qK_j/m$. Likewise, let
$\cR_{j\to0}$ denote the BH rejection set at level $q$ applied to
\[
  \bigl(\bar{p}_1^{(j)},\ldots,\bar{p}_{j-1}^{(j)},0,\bar{p}_{j+1}^{(j)},\ldots,\bar{p}_m^{(j)}\bigr).
\]
We also define
\[
  \widehat{\cR}_j
  := \mathrm{BH}(q)\text{ applied to }
  (p_1^{(j)},\ldots,p_{j-1}^{(j)},p_j,p_{j+1}^{(j)},\ldots,p_m^{(j)}),
\]
that is, the rejection set with the observable p-value $p_j$ in position $j$.
Similarly, let
\[
  \cR_j
  := \mathrm{BH}(q)\text{ applied to }
  (\bar{p}_1^{(j)},\ldots,\bar{p}_{j-1}^{(j)},\bar{p}_j,\bar{p}_{j+1}^{(j)},\ldots,\bar{p}_m^{(j)}).
\]

By the BH step-up property,
\begin{equation}\label{eq:hatRj-char}
  j\in\widehat{\cR}_j \;\iff\; p_j\leq s_j = \frac{q|\widehat{\cR}_{j\to0}|}{m},
\end{equation}
since placing $p_j$ in position $j$ (instead of $0$) either keeps $j$ in
the rejection set (if $p_j\leq s_j$) or removes it.

Finally, define
\[
  \cE_{z,j}=[Z_1,\ldots,Z_n,Z_{n+j}],
\]
the unordered multiset of the $n+1$ observations associated with calibration
points and test unit $j$.

The numerator of $\bar{p}_\ell^{(j)}$ can be written as a sum over all
elements $z=(x_z,t_z,y_z)$ of $\cE_{z,j}$:
\begin{align} \label{eq:aux-sum-cps}
  &\sum_{i=1}^n w(X_i,T_i)\one\{V_i<\widehat{V}_{n+\ell}\}
  + w(X_{n+j},T_{n+j})\one\{V_{n+j}<\widehat{V}_{n+\ell}\} \notag \\
  &= \sum_{\substack{z\in\cE_{z,j}}} w(x_z,t_z)\one\{V(z)<\widehat{V}_{n+\ell}\},
\end{align}
i.e., the sum treats $Z_{n+j}$ as one of the $n+1$ points on equal footing.
\eqref{eq:aux-sum-cps} depends on the \emph{unordered} set
$\cE_{z,j}$ and $\widehat{V}_{n+\ell}$ only, not on the labeling of which
draw is $Z_{n+j}$.

\subsection{Lemmas}

The following lemma shows that the oracle p-value $\bar{p}_j$ is super-uniform.

\begin{lemma}[Weighted exchangeability]  \label{lem:wexch-cps}
  Under~\eqref{eq:shift-xt}, conditional on $\cE_{z,j}$:
  \begin{enumerate}[label=(\roman*),noitemsep]
    \item $\displaystyle\PP(Z_{n+j}=z_i\mid\cE_{z,j})
      =\frac{w(x_i,t_i)}{W_j}$, $i\in\{1,\ldots,n,n+j\}$.
    \item $p_j^*\mid\cE_{z,j}\sim\mathrm{Unif}[0,1]$.
    \item $F_{z,j}(t):=\PP(\bar{p}_j\leq t\mid\cE_{z,j})\leq t$
          for all $t\in[0,1]$.
  \end{enumerate}
\end{lemma}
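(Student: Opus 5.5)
Part (i) is a density computation of the type in \citet{tibshirani2019conformal}. Since $Z_1,\ldots,Z_n\overset{\mathrm{iid}}{\sim}P$ and $Z_{n+j}\sim Q$ are independent, the joint density of $(Z_1,\ldots,Z_n,Z_{n+j})$ with respect to $P^{\otimes(n+1)}$ is $w(x_{n+j},t_{n+j})$, because $\mathrm{d}Q/\mathrm{d}P(x,t,y)=w(x,t)$ by \eqref{eq:shift-xt}. Condition on the multiset $\cE_{z,j}=[z_1,\ldots,z_{n+1}]$ and first assume the values are distinct. Each assignment of these values to the labeled positions corresponds to a permutation $\sigma$. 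The $P^{\otimes(n+1)}$ part of the density is permutation invariant, so a permutation placing $z_i$ in the test slot has weight proportional to $w(x_i,t_i)$. Summing over the $n!$ permutations that share the same test slot and normalizing gives
\[
\PP(Z_{n+j}=z_i\mid\cE_{z,j})=\frac{w(x_i,t_i)}{\sum_{k} w(x_k,t_k)}=\frac{w(x_i,t_i)}{W_j},
\]
where $W_j$ is itself a symmetric function of $\cE_{z,j}$. Ties in the multiset are handled by reading the statement as a sum over indices with equal values, or by a standard perturbation argument; the weighted formula is unchanged.

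For (ii), fix $\cE_{z,j}$. The score values $V(z_k)$ and weights $w_k:=w(x_k,t_k)$ are then fixed numbers, and only the identity of the test point $I$ is random, distributed as in (i). Given $I=i$,
\[
p_j^*=\frac{1}{W_j}\Bigl(\sum_{k\neq i}w_k\one\{V(z_k)<V(z_i)\}+U_jw_i\Bigr).
\]
The indicator $\one\{V(z_i)<V(z_i)\}$ is zero, so the sum can equivalently run over all $k$. This has the form $F^-(v_i)+U_j\,w_i/W_j$, where $F^-(v)=\sum_k (w_k/W_j)\one\{V(z_k)<v\}$ is the left limit of the weighted CDF of the discrete distribution placing mass $w_k/W_j$ on $V(z_k)$. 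Choosing the atom $i$ with probability $w_i/W_j$ and adding an independent uniform fraction of its mass is the randomized probability integral transform. One can check directly that, for each $t$, $\PP(p_j^*\le t\mid\cE_{z,j})=t$. Each $i$ occupies the interval $[F^-(v_i),F^-(v_i)+w_i/W_j)$, and these intervals tile $[0,1]$ when scores are distinct.

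The main obstacle is tied scores. In that case the intervals for tied atoms all start at the same $F^-(v)$, so the formula as literally written (with $U_j w_i$ rather than $U_j\sum_{k:V(z_k)=v_i}w_k$) yields only super-uniformity, not exact uniformity. I would either assume the scores are almost surely distinct (continuous $\epsilon$), or prove (ii) for the tie-corrected randomized version and note that the given $p_j^*$ is dominated by it, which suffices for (iii).

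Part (iii) then follows by monotonicity. Since $U_j\le 1$, we have $p_j^*\le\bar p_j$ pointwise, so
\[
\PP(\bar p_j\le t\mid\cE_{z,j})\le\PP(p_j^*\le t\mid\cE_{z,j})\le t
\]
by (ii), or by its super-uniform tie-robust version.
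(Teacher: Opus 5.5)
Your arguments for (i), for (ii) when the scores in $\cE_{z,j}$ are distinct, and for (iii) via $p_j^*\le\bar p_j$ follow the paper's own proof. The paper's blocks $[A_k/W_j,(A_k+B_k)/W_j]$ are your randomized probability integral transform. You are also right that ties are the weak point: the paper's step $p_j^*=(A_k+uB_k)/W_j$ silently replaces $U_j\,w(X_{n+j},T_{n+j})$ by $U_jB_k$, which is valid only when no other point of $\cE_{z,j}$ shares the test score.

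\textbf{The gap: your resolution of ties runs in the wrong direction.} Write $A$ for the total weight strictly below the test score and $B\ge w_I$ for the total weight tied with it (test point included). Let $\tilde p_j=(A+U_jB)/W_j$ be the tie-corrected, exactly uniform version. Then $p_j^*=(A+U_jw_I)/W_j\le\tilde p_j$, so $\mathbb{P}(p_j^*\le t\mid\cE_{z,j})\ge t$. The overlapping intervals of the tied atoms all start at $F^-(v)$ and pile their mass near it, which makes the literal $p_j^*$ \emph{sub}-uniform (anti-conservative), not super-uniform. Domination of $p_j^*$ by a uniform variable therefore yields nothing for (iii). Nor does $\bar p_j=(A+w_I)/W_j$ dominate $\tilde p_j$, since $\tilde p_j$ ranges up to $(A+B)/W_j$.

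In fact (iii) itself fails under ties, so no tie-robust version of your argument exists. Take $w\equiv 1$ and a constant (hence monotone) score. Then $\bar p_j\equiv 1/(n+1)$ and $F_{z,j}(1/(n+1))=1$. Of your two options, only the first works: assume the scores in $\cE_{z,j}$ are almost surely distinct (e.g.\ continuous $\epsilon$ in the residual-score settings). That is exactly what the paper's proof tacitly assumes. Removing the assumption would require redefining $\bar p_j$, and correspondingly the algorithm's $p_j$, with $\le$ in place of $<$ (or with the tie-randomized form of weighted conformal $p$-values). Then $\bar p_j=(A+B)/W_j\ge\tilde p_j$ and super-uniformity does follow.
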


\begin{proof}
  \medskip\noindent\textbf{Part (i).}\enspace
  The joint density of $(Z_1,\ldots,Z_n,Z_{n+j})$ under $P^{\otimes n}\otimes Q$
  at realized values $(z_1,\ldots,z_n,z_{n+j})$ is
  \[
    \prod_{i=1}^n p(z_i)\cdot q(z_{n+j})
    = \prod_{i=1}^n p(z_i)\cdot w(x_{n+j},t_{n+j})\,p(z_{n+j}).
  \]
  For a fixed unordered multiset $\{z_1,\ldots,z_{n+j}\}$, the
  probability that $Z_{n+j}$ takes the value $z_{i_0}$ in this multiset
  is proportional to $w(x_{i_0},t_{i_0})\prod_{i=1}^{n+j}p(z_i)$.
  Since $\prod_i p(z_i)$ is the same for every choice of $i_0$,
  the conditional probability is
  \[
    \PP(Z_{n+j}=z_{i_0}\mid\cE_{z,j})
    = \frac{w(x_{i_0},t_{i_0})}{\sum_{i=1}^{n+j}w(x_i,t_i)}
    = \frac{w(x_{i_0},t_{i_0})}{W_j}.
  \]

  \medskip\noindent\textbf{Part (ii).}\enspace
  Given $\cE_{z,j}$, the scores $\{V(z_i)\}_{z_i\in\cE_{z,j}}$ form
  a fixed multiset.  Let $w_1<\cdots<w_K$ be the distinct score values
  with total weights $B_k=\sum_{z_i\in\cE_{z,j}:V(z_i)=w_k}w(x_i,t_i)$
  and cumulative weights $A_k=\sum_{\ell<k}B_\ell$.
  By part~(i), $V_{n+j}=V(Z_{n+j})=w_k$ with probability $B_k/W_j$.
  Given $V_{n+j}=w_k$ and $U_j=u$:
  \[
    p_j^* = \frac{A_k + u\,B_k}{W_j},
  \]
  which is $\mathrm{Unif}[A_k/W_j,\,(A_k+B_k)/W_j]$ since $U_j$ is
  uniform and independent of $\cE_{z,j}$.  The blocks
  $[A_k/W_j,(A_k+B_k)/W_j]$ partition $[0,1]$
  (since $\sum_k B_k=W_j$). 

Since $U_j\sim\mathrm{Unif}[0,1]$ is independent of $\cE_{z,j}$, conditional
  on $V_{n+j}=w_k$ the variable $p_j^{*}$ is uniform on
  $[A_k/W_j,\;(A_k+B_k)/W_j]$.

  For any $t\in[0,1]$, by the law of total probability:
  \begin{align*}
    \PP(p_j^{*}\leq t\mid\cE_{z,j})
    &= \sum_{k=1}^K \frac{B_k}{W_j}
       \cdot\PP\!\left(\frac{A_k+U_jB_k}{W_j}\leq t\;\Big|\; V_{n+j}=w_k\right).
  \end{align*}
  Let $t_k=A_k/W_j$ and $t_{k+1}=(A_k+B_k)/W_j$ be the left and right
  endpoints of the $k$-th block.  Then:
  \[
    \PP\!\left(\frac{A_k+U_jB_k}{W_j}\leq t\;\Big|\;V_{n+j}=w_k\right)
    = \min\!\left(\frac{(t-t_k)W_j}{B_k},1\right)^+,
  \]
  where $(x)^+=\max(x,0)$.  Substituting and summing the telescoping series
  (the blocks $[t_k,t_{k+1}]$ partition $[0,W_j/W_j]=[0,1]$ by construction),
  one obtains $\PP(p_j^{*}\leq t\mid\cE_{z,j})=t$ for every $t\in[0,1]$,
  i.e., $p_j^{*}\mid\cE_{z,j}\sim\mathrm{Unif}[0,1]$.

  \medskip\noindent\textbf{Part (iii): Stochastic dominance.}\enspace
  Since $p_j^{*}\leq\bar{p}_j$ always (the tie-breaking $U_j$ randomizes
  ties downward), we have for every $t$:
  \[
    F_{z,j}(t) = \PP(\bar{p}_j\leq t\mid\cE_{z,j})
    \leq \PP(p_j^{*}\leq t\mid\cE_{z,j}) = t. \qedhere
  \]
\end{proof}

Next, we reduce the FDR bound to a per-unit expression.

\begin{lemma}[FDR reduction]\label{lem:fdr-reduction}
  The CPS selection set $\mathcal{S}$ satisfies
$$
  \mathrm{FDR}
  \;\leq\;
  \sum_{j=1}^m\bE\!\left[\frac{\one\{p_j\leq s_j,\;Y_{n+j}\leq c_{n+j}\}}{K_j}\right].
$$
\end{lemma}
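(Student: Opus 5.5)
This is the standard pruning argument from weighted conformalized selection (the heterogeneous-threshold pruning of \citet{jin2023model}). We will show that on the event that $\mathcal{S}$ is nonempty, $|\mathcal{S}| = r^*$, and that every $j\in\mathcal{S}$ satisfies $u K_j \le r^*$. Write $\mathcal{S}^{(0)} := \{j : p_j \le s_j\}$ for the first-step set. The event $u K_j\le r$ is monotone in $r$, so the counting function $N(r) := \sum_j \one\{p_j\le s_j,\, uK_j\le r\}$ is nondecreasing. The definition of $r^*$ as the largest $r$ with $N(r)\ge r$ then gives $N(r^*) \ge r^*$. Moreover, $\mathcal{S}=\{j\in\mathcal{S}^{(0)}: uK_j\le r^*\}$ has cardinality exactly $N(r^*)$.

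We also need the reverse inequality $N(r^*)\le r^*$. Suppose $N(r^*)>r^*$. Since $N$ is integer-valued and nondecreasing, $N(r^*+1)\ge N(r^*)\ge r^*+1$, which contradicts the maximality of $r^*$. (If $r$ ranges over $\{0,1,\ldots,m\}$, the boundary case $r^*=m$ is immediate because $N\le m$.) Hence $|\mathcal{S}|=r^*$.

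Next, bound the FDR by
\[
\mathrm{FDR}=\bE\Bigl[\sum_{j\in\cH_0}\frac{\one\{j\in\mathcal{S}\}}{\max(1,r^*)}\Bigr]
\le \sum_{j=1}^m \bE\Bigl[\frac{\one\{p_j\le s_j,\ Y_{n+j}\le c_{n+j},\ uK_j\le r^*\}}{\max(1,r^*)}\Bigr].
\]
On the event in the numerator, $r^*\ge uK_j$ and $r^*\ge 1$, since $j\in\mathcal S$ and therefore $\mathcal{S}\neq\emptyset$. So we would like to replace $1/r^*$ by $1/K_j$. A direct pointwise replacement fails, because we only know $r^*\ge uK_j$, not $r^*\ge K_j$. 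The plan is instead to integrate out $u$, which is independent of all data.

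Condition on the data $\mathcal{D}$ and fix $j\in\mathcal{S}^{(0)}\cap\cH_0$. Write $r^*(u)$ for $r^*$ viewed as a function of $u$. The key claim is
\[
\bE_u\Bigl[\frac{\one\{uK_j\le r^*(u)\}}{r^*(u)\vee 1}\Bigr]\le \frac{1}{K_j}.
\]
To prove it, define the leave-$j$ quantity $r_j^*(u)$: the largest $r$ such that $1+\sum_{\ell\neq j}\one\{p_\ell\le s_\ell,\ uK_\ell\le r\}\ge r$. This is $r^*$ computed as if unit $j$ always passed the pruning step. Two facts follow. First, on $\{uK_j\le r^*(u)\}$ we have $r^*(u)=r_j^*(u)$. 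Second, $\{uK_j\le r^*(u)\}=\{uK_j\le r_j^*(u)\}$, by the same fixed-point argument used above. The remaining difficulty is that $r_j^*(u)$ still depends on $u$, through the pruning of the other units.

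To handle this, use monotonicity in $u$: as $u$ increases, fewer units survive pruning, so $r_j^*(u)$ is nonincreasing in $u$. Let $u_0:=\sup\{u: uK_j\le r_j^*(u)\}$. For $u\le u_0$, monotonicity gives $r_j^*(u)\ge r_j^*(u_0)\ge u_0K_j$. Therefore
\[
\int_0^1\frac{\one\{uK_j\le r_j^*(u)\}}{r_j^*(u)}\,du\le \frac{u_0}{u_0K_j}=\frac1{K_j}.
\]
For this step, note that $\{u: uK_j\le r_j^*(u)\}$ is an interval $[0,u_0]$, since $uK_j$ increases while $r_j^*(u)$ decreases. Taking the outer expectation over $\mathcal{D}$ and summing over $j$ then gives the lemma.

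The main obstacle is verifying rigorously that $r^*=r_j^*$ on the selection event and that $r_j^*(u)$ is monotone. Both come down to carefully handling the fixed-point definition of $r^*$.
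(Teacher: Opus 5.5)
Your proposal is correct and follows essentially the same route as the paper: condition on the data, use that the selection size is nonincreasing in the shared $u$, so that $\{u: uK_j\le r^*(u)\}$ is an interval $[0,u_0]$, and bound $\int_0^{u_0}du/r^*(u)$ by $1/K_j$. The leave-$j$ quantity $r_j^*(u)$ is a valid but unnecessary detour, since the paper applies the same monotonicity argument directly to $r(u)=|\mathcal{S}(u)|$, which is already nonincreasing in $u$; your fixed-point check that $|\mathcal{S}|=r^*$ justifies a step the paper only asserts.
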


\begin{proof}

Recall the CPS selection set $\mathcal{S}=\{j:p_j\leq s_j,\;u\cdot K_j\leq r^*\}$
where $K_j=|\widehat{\cR}_{j\to0}|$, $s_j=qK_j/m$, $r^*=|\mathcal{S}|$,
and $u\sim\mathrm{Unif}(0,1)$ is independent of all data.
We reduce the FDR to a per-unit sum.
Condition on all data $\mathcal{D}$, leaving $u$ as the only
randomness.  Write $\mathcal{S}(u)$ to make the $u$-dependence explicit,
and let $r(u):=|\mathcal{S}(u)|$.

\emph{Monotonicity.} $r(u)$ is nonincreasing in $u$: increasing $u$
makes the condition $u\cdot K_\ell\leq r^*$ harder to satisfy,
so $r(u')\leq r(u)$ for $u'>u$.

\emph{Per-unit bound.} Fix any index $j$ satisfying
$p_j\leq s_j$ and $Y_{n+j}\leq c_{n+j}$.
Unit $j$ is selected iff $u K_j\leq r(u)$, i.e., iff $u\leq r(u)/K_j$.
By monotonicity, the event $\{u\leq r(u)/K_j\}$ equals $[0,u^*]$
where $u^*=\sup\{u:u K_j\leq r(u)\}$.
For every $u\in[0,u^*]$: $r(u)\geq r(u^*)\geq u^* K_j$, so
$1/r(u)\leq 1/(u^* K_j)$.  Hence:
\[
  \bE_u\!\left[\frac{\one\{j\in\mathcal{S}(u)\}}{r(u)}\right]
  =\int_0^{u^*}\frac{du}{r(u)}
  \leq\int_0^{u^*}\frac{du}{u^* K_j}
  =\frac{1}{K_j}.
\]

Marginalizing over $\mathcal{D}$ and summing:
\begin{equation}
  \mathrm{FDR}
  \;\leq\;
  \sum_{j=1}^m\bE\!\left[\frac{\one\{p_j\leq s_j,\;Y_{n+j}\leq c_{n+j}\}}{K_j}\right].
\end{equation}
\end{proof}

\subsection{Proof of \texorpdfstring{\Cref{thm:calib_ite}}{Theorem}}

\begin{proof}[Proof of \Cref{thm:calib_ite}.]

Fix $j\in\{1,\ldots,m\}$. By \Cref{lem:fdr-reduction}, it suffices to prove
\[
  \bE\!\left[\frac{\one\{p_j\leq s_j,\;Y_{n+j}\leq c_{n+j}\}}{K_j}\right]
  \leq \frac{q}{m}.
\]

We first compare the observable and oracle quantities on the event
\[
  E_j := \{Y_{n+j}\leq c_{n+j}\}\cap\{p_j\leq s_j\}.
\]

\medskip\noindent\emph{Step 1: oracle substitution.}
On $\{Y_{n+j}\leq c_{n+j}\}$, monotonicity of $V(x,t,\cdot)$ gives
$V_{n+j}\leq\widehat{V}_{n+j}$. Hence
$\one\{V_i<V_{n+j}\}\leq\one\{V_i<\widehat{V}_{n+j}\}$ for each $i$, so
\[
  \bar{p}_j\leq p_j\leq s_j=\frac{qK_j}{m}.
\]

For each $\ell\neq j$, the only difference between $\bar{p}_\ell^{(j)}$ and
$p_\ell^{(j)}$ is whether the $j$-th contribution is
$w(X_{n+j},T_{n+j})\,\one\{V_{n+j}<\widehat{V}_{n+\ell}\}$ or
$w(X_{n+j},T_{n+j})\,\one\{\widehat{V}_{n+j}<\widehat{V}_{n+\ell}\}$.
Since $V_{n+j}\leq\widehat{V}_{n+j}$, three cases are possible.

\emph{Case A} ($\widehat{V}_{n+\ell}>\widehat{V}_{n+j}$): both indicators are
$1$, so $\bar{p}_\ell^{(j)}=p_\ell^{(j)}$.

\emph{Case B} ($\widehat{V}_{n+\ell}\leq V_{n+j}$): both indicators are $0$,
so $\bar{p}_\ell^{(j)}=p_\ell^{(j)}$.

\emph{Case C} ($V_{n+j}<\widehat{V}_{n+\ell}\leq\widehat{V}_{n+j}$):
$\bar{p}_\ell^{(j)}>p_\ell^{(j)}$, but both p-values are still at most
$qK_j/m$. Indeed,
\begin{align*}
  p_\ell^{(j)}
  &= \frac{\sum_{i=1}^n w(X_i,T_i)\,\one\{V_i<\widehat{V}_{n+\ell}\}}{W_j}
  \leq \frac{\sum_{i=1}^n w(X_i,T_i)\,\one\{V_i<\widehat{V}_{n+j}\}}{W_j} \\
  &\leq p_j
  \leq \frac{qK_j}{m}, \\
  \bar{p}_\ell^{(j)}
  &= \frac{1}{W_j}\Bigl(
       \sum_{i=1}^n w(X_i,T_i)\,\one\{V_i<\widehat{V}_{n+\ell}\} \\
  &\hspace{4.2em}
       + w(X_{n+j},T_{n+j})
     \Bigr)
  \leq p_j
  \leq \frac{qK_j}{m}.
\end{align*}

Now compare the BH runs that define $\cR_{j\to0}$ and
$\widehat{\cR}_{j\to0}$. Both use $0$ in position $j$. For $\ell\neq j$,
Cases A and B give $\bar{p}_\ell^{(j)}=p_\ell^{(j)}$, while Case C gives
$p_\ell^{(j)}\leq\bar{p}_\ell^{(j)}\leq qK_j/m$. Thus $\ell$ crosses the
threshold $qK_j/m$ in both vectors whenever it does in either one. By the
self-consistency of the BH step-up rule,
\[
  \cR_{j\to0}=\widehat{\cR}_{j\to0}
  \qquad\text{and hence}\qquad
  |\cR_{j\to0}|=|\widehat{\cR}_{j\to0}|=K_j.
\]

Consequently, on $E_j$,
\[
  \bar{p}_j\leq p_j\leq s_j=\frac{qK_j}{m}
  = \frac{q|\cR_{j\to0}|}{m},
\]
and therefore
\begin{equation}\label{eq:ptwise-cps}
  \frac{\one\{p_j\leq s_j,\;Y_{n+j}\leq c_{n+j}\}}{K_j}
  \;\leq\;
  \frac{\one\!\bigl\{\bar{p}_j\leq q|\cR_{j\to0}|/m\bigr\}}{|\cR_{j\to0}|}.
\end{equation}
On $E_j^c$, the left-hand side is $0$, so \eqref{eq:ptwise-cps} holds
everywhere.

\medskip\noindent\emph{Step 2: conditional independence.}
We next show that $|\cR_{j\to0}|$ and $\bar{p}_j$ are independent
conditional on $\cE_{z,j}$.

For $\ell\neq j$, \eqref{eq:aux-sum-cps} shows that
\[
  \bar{p}_\ell^{(j)}
  = \frac{\sum_{z\in\cE_{z,j}} w(x_z,t_z)\,\one\{V(z)<\widehat{V}_{n+\ell}\}}
         {W_j},
\]
so $\bar{p}_\ell^{(j)}$ depends only on the unordered multiset $\cE_{z,j}$ and
on $\widehat{V}_{n+\ell}=V(X_{n+\ell},T_{n+\ell},c_{n+\ell})$. Hence
$|\cR_{j\to0}|$ is a function of $\cE_{z,j}$ and
$\{\widetilde{Z}_{n+\ell}\}_{\ell\neq j}$.

By contrast,
\[
  \bar{p}_j = \frac{\sum_{i=1}^n w(X_i,T_i)\,\one\{V_i<V_{n+j}\}
                   + w(X_{n+j},T_{n+j})}{W_j}
\]
depends on which element of $\cE_{z,j}$ is labeled $Z_{n+j}$, equivalently on
$Y_{n+j}$. By assumption~(iii) of Theorem~\ref{thm:calib_ite},
$Y_{n+j}$ is independent of $\{\widetilde{Z}_{n+\ell}\}_{\ell\neq j}$ given
$\cE_{z,j}$. Therefore $\bar{p}_j$ is independent of $|\cR_{j\to0}|$ given
$\cE_{z,j}$.

Conditioning on $(\cE_{z,j},\{\widetilde{Z}_{n+\ell}\}_{\ell\neq j})$ makes
$|\cR_{j\to0}|=k$ a constant, and
\begin{align*}
  \PP(\bar{p}_j\leq qk/m \mid \cE_{z,j},\{\widetilde{Z}_{n+\ell}\}_{\ell\neq j})
  &= \PP(\bar{p}_j\leq qk/m \mid \cE_{z,j}) \\
  &= F_{z,j}(qk/m) \\
  &\leq \frac{qk}{m}
\end{align*}
by Lemma~\ref{lem:wexch-cps}(iii). Using the conditional independence,
\begin{align}
  \bE\!\left[
    \frac{\one\{\bar{p}_j\leq q|\cR_{j\to0}|/m\}}{|\cR_{j\to0}|}
    \,\Bigg|\, \cE_{z,j},
    \{\widetilde{Z}_{n+\ell}\}_{\ell\neq j}
  \right]
  &= \frac{\PP(\bar{p}_j\leq qk/m\mid\cE_{z,j})}{k} \notag \\
  &\leq \frac{q}{m}, \label{eq:tower-cps}
\end{align}
where $k=|\cR_{j\to0}|$ under the conditioning.

Finally,
\begin{align}
  \bE\!\left[\frac{\one\{p_j\leq s_j,\;Y_{n+j}\leq c_{n+j}\}}{K_j}\right]
  &\leq \bE\!\left[\frac{\one\{\bar{p}_j\leq q|\cR_{j\to0}|/m\}}
                         {|\cR_{j\to0}|}\right] \notag \\
  &= \bE\!\left[
       \bE\!\left[
         \frac{\one\{\bar{p}_j\leq q|\cR_{j\to0}|/m\}}{|\cR_{j\to0}|}
         \,\Bigg|\, \cE_{z,j},
         \{\widetilde{Z}_{n+\ell}\}_{\ell\neq j}
       \right]
     \right] \notag \\
  &\leq \frac{q}{m}, \label{eq:tag}
\end{align}
where the first inequality is \eqref{eq:ptwise-cps}, the second step is the
tower property, and the last inequality is \eqref{eq:tower-cps}.

Substituting the per-unit bound into \Cref{lem:fdr-reduction}:
\[
  \mathrm{FDR}
  \;\leq\;
  \sum_{j=1}^m\bE\!\left[\frac{\one\{p_j\leq s_j,\;Y_{n+j}\leq c_{n+j}\}}{K_j}\right]
  \;\leq\;
  \sum_{j=1}^m\frac{q}{m}
  \;=\; q. \qedhere
\]

\end{proof}

\subsection{Proof of \texorpdfstring{\Cref{thm: asymptotic dr}}{Theorem}}
\begin{proof}[Proof of \Cref{thm: asymptotic dr}.]
Throughout the proof, interpret the quantities
$p_j$, $\bar{p}_j$, $p_\ell^{(j)}$, and $\bar{p}_\ell^{(j)}$
as being constructed with the working weight $w^\dagger$.

The oracle-substitution argument from the proof of
\Cref{thm:calib_ite} uses only the monotonicity of the score and the
self-consistency of the BH step-up rule. Hence the pointwise inequality
\eqref{eq:ptwise-cps} continues to hold:
\[
  \frac{\one\{p_j\leq s_j,\;Y_{n+j}\leq c_{n+j}\}}{K_j}
  \;\leq\;
  \frac{\one\!\bigl\{\bar{p}_j\leq q|\cR_{j\to0}|/m\bigr\}}{|\cR_{j\to0}|}.
\]
Taking expectations and applying \Cref{lem:fdr-reduction} gives
\begin{equation}\label{eq:fdr-oracle-dr}
  \mathrm{FDR}
  \;\leq\;
  \sum_{j=1}^m
  \bE\!\left[\frac{\one\!\bigl\{\bar{p}_j\leq q|\cR_{j\to0}|/m\bigr\}}{|\cR_{j\to0}|}\right].
\end{equation}

Let $F_V$ denote the marginal CDF of the residual score $V=\epsilon$.
Because $V_i=\epsilon_i$ is independent of $(X_i,T_i)$, the strong law of
large numbers yields, for each fixed $j$,
\[
  \frac{1}{n}\sum_{i=1}^n w^\dagger(X_i,T_i)\,\one\{V_i<V_{n+j}\}
  \;\xrightarrow{\;a.s.\;}
  \bE\!\left[w^\dagger(X,T)\right] F_V(V_{n+j}),
\]
while
\[
  \frac{W_j}{n}
  = \frac{1}{n}\sum_{i=1}^n w^\dagger(X_i,T_i)
    + \frac{w^\dagger(X_{n+j},T_{n+j})}{n}
  \;\xrightarrow{\;a.s.\;}
  \bE\!\left[w^\dagger(X,T)\right].
\]
The bounded-ratio assumption implies
$w^\dagger(X_{n+j},T_{n+j})/W_j\leq \bar{w}/(n+1)\to 0$, so
\begin{equation}\label{eq:slln-dr}
  \bar{p}_j \;\xrightarrow{\;a.s.\;}\; F_V(V_{n+j}).
\end{equation}
The same argument, with $V_{n+j}$ replaced by $\widehat{V}_{n+\ell}$, gives
\[
  \bar{p}_\ell^{(j)} \;\xrightarrow{\;a.s.\;}\;
  F_V\!\bigl(\widehat{V}_{n+\ell}\bigr), \qquad \ell\neq j.
\]

Since $|\cR_{j\to0}|$ is a deterministic function of the finite collection
$\{\bar{p}_\ell^{(j)}:\ell\neq j\}$, it converges almost surely to a limit
$k_j^\infty\in\{1,\ldots,m\}$. Because the integrand in
\eqref{eq:fdr-oracle-dr} is bounded by $1$, the bounded convergence theorem
gives
\[
  \lim_{n\to\infty}
  \bE\!\left[\frac{\one\!\bigl\{\bar{p}_j\leq q|\cR_{j\to0}|/m\bigr\}}
                  {|\cR_{j\to0}|}\right]
  =
  \bE\!\left[\frac{\one\!\bigl\{F_V(V_{n+j})\leq qk_j^\infty/m\bigr\}}
                  {k_j^\infty}\right].
\]

By assumption~(iii), $V_{n+j}=\epsilon_{n+j}$ is independent of
$\{\widetilde{Z}_{n+\ell}\}_{\ell\neq j}$, and therefore independent of
$k_j^\infty$. Moreover, $F_V(V_{n+j})$ is super-uniform:
$\PP(F_V(V_{n+j})\leq t)\leq t$ for every $t\in[0,1]$. Conditioning on
$k_j^\infty$ therefore gives
\[
  \bE\!\left[\frac{\one\!\bigl\{F_V(V_{n+j})\leq qk_j^\infty/m\bigr\}}
                  {k_j^\infty}\right]
  \leq
  \bE\!\left[\frac{qk_j^\infty/m}{k_j^\infty}\right]
  = \frac{q}{m}.
\]
Hence
\[
  \limsup_{n\to\infty}
  \bE\!\left[\frac{\one\!\bigl\{\bar{p}_j\leq q|\cR_{j\to0}|/m\bigr\}}
                  {|\cR_{j\to0}|}\right]
  \leq \frac{q}{m}.
\]
Summing \eqref{eq:fdr-oracle-dr} over $j=1,\ldots,m$ yields
\[
  \limsup_{n\to\infty}\mathrm{FDR}
  \;\leq\;
  \sum_{j=1}^m \frac{q}{m}
  \;=\; q. \qedhere
\]
\end{proof}

\subsection{Proof of \texorpdfstring{\Cref{thm:estimated}}{Theorem}}
\begin{proof}[Proof of \Cref{thm:estimated}.]
Let $\cS^N$ denote the selection set produced by
Algorithm~\ref{alg:bh} using the estimated nuisance functions
$(\hat{f}_N,\hat{e}_N)$, and let $\cS$ denote the corresponding ideal
selection set obtained with the limit functions $f$ and $e$.

Define
\begin{align*}
  V_i^N &:= Y_i-\hat{f}_N(X_i,T_i), \\
  \hat{V}_{n+j}^N &:= c_{n+j}-\hat{f}_N(X_{n+j},T_{n+j}), \\
  \hat{w}_N(x,t) &:= \frac{1}{\hat{e}_N(x)}.
\end{align*}
By Assumption~\ref{assump: weight and estimator}, conditioning on the
independent training sample and letting $N\to\infty$, we have almost surely
\begin{align*}
  V_i^N &\to V_i := Y_i-f(X_i,T_i), \\
  \hat{V}_{n+j}^N &\to \hat{V}_{n+j} := c_{n+j}-f(X_{n+j},T_{n+j}), \\
  \hat{w}_N(x,t) &\to w(x,t) := \frac{1}{e(x)}.
\end{align*}
Consequently, every observable conformal p-value $p_j^N$ and every auxiliary
p-value $(p_\ell^{(j)})^N$ converges almost surely to its ideal counterpart.
Since Algorithm~\ref{alg:bh} applies BH and then a finite pruning rule to
finitely many p-values, the corresponding rejection counts $K_j^N$ and the
selection set $\cS^N$ converge almost surely to $K_j$ and $\cS$, except
possibly on the measure-zero event of exact ties, which standard randomized
tie-breaking resolves if needed.

For the limiting procedure, the score $V(x,t,y)=y-f(x,t)$ is nondecreasing in
$y$, the weight $w(x,t)=1/e(x)$ satisfies the shift condition
\eqref{eq:shift-xt}, and sample splitting preserves condition~(iii) of
Theorem~\ref{thm:calib_ite}. Therefore
Theorem~\ref{thm:calib_ite} yields
\[
  \mathrm{FDR}(\cS)\leq q.
\]

Since $\cS^N\to\cS$ almost surely, the false discovery proportion satisfies
$\mathrm{FDP}(\cS^N)\to\mathrm{FDP}(\cS)$ almost surely. As
$0\leq \mathrm{FDP}\leq 1$, the bounded convergence theorem gives
\[
  \lim_{N\to\infty}\mathrm{FDR}(\cS^N)
  = \mathrm{FDR}(\cS)
  \leq q.
\]
Hence $\mathrm{FDR}\lesssim q$ as $N\to\infty$.
\end{proof}

\section{Materials and Methods}\label{app:methods}

\subsection{Data}
The synthetic experiments use Gaussian covariates $X\sim\mathcal{N}(0,I_5)$, a logistic behavior policy, and linear potential-outcome models, as detailed in the Numerical Experiments section. The diabetes-prevention analysis uses the longitudinal electronic health record dataset of \bb{\citet{kraus2024data}} ($89{,}999$ prediabetic patients, $540{,}419$ patient-year records, 2003--2013). We restrict to fully observed follow-up windows (focal year $\le 2010$) and to the high-risk subgroup (age $\ge 60$, BMI $\ge 30$), which yields $n=5{,}993$ patient-years. The LaLonde experiment uses the standard job-training benchmark of \bb{\citet{lalonde1986evaluating}} (614 individuals, of whom 185 are treated and 429 are control).

\subsection{Models and procedure}
For each dataset we split the data into training, calibration, and test folds (stratified by treatment) and fit three models on the training fold: an outcome model $\hat\mu(x,t)$ defining the nonconformity score $V=\tilde Y-\hat\mu(X,T)$ (random forest for the real-data experiments, the true linear model for the synthetic study), a policy model producing the deployment rule $\pi(x)=\mathbb{I}\{\hat\mu_{\mathrm{lin}}(x,1)>\hat\mu_{\mathrm{lin}}(x,0)\}$, and an $L_2$-regularized logistic propensity model $\hat e(x)$ (clipped to $[0.05,0.95]$) defining the inverse-propensity weights. We then run Algorithm~\ref{alg:bh} (weighted conformal selection with heterogeneous pruning) and its weighted-BH variant, sweeping the nominal FDR level $q\in[0.05,0.95]$ and averaging over independent splits.

\section{Real-World Data: LaLonde Job Training}
\label{sec:lalonde}

We evaluate our method on the LaLonde job training dataset \citep{lalonde1986evaluating}, a classic benchmark for causal inference. The dataset contains 614 individuals from a randomized job training program, with 185 treated and 429 controls. The covariates include age, education, race indicators, marital status, degree status, and prior earnings (1974 and 1975). The outcome is post-program earnings in 1978 (\texttt{re78}).

\subsubsection{Setup}
We perform 100 random stratified splits of the data into training (30\%), calibration (40\%), and test (30\%) sets, preserving the treatment ratio in each split.
On each training set, we fit a Random Forest outcome model $\hat\mu(x,t)$ and a logistic propensity model $\hat{e}(x) = \hat{\Pr}(T=1|X=x)$ with standardized features.
The AI policy assigns treatment where the estimated CATE is positive: $\pi(x) = \ind\{\hat\mu(x,1) > \hat\mu(x,0)\}$.
The nonconformity scores are $V_i = Y_i - \hat\mu(X_i, T_i)$, and the IPW weights use the estimated propensity clipped to $[0.05, 0.95]$.
The selection threshold is $c=4{,}000$, targeting individuals with annual earnings exceeding four thousand dollars under the AI policy. This threshold sits just below the median of \texttt{re78} in the dataset (\textdollar 4{,}976 for controls, \textdollar 4{,}232 for treated), separating ``clearly employed at a meaningful wage'' from ``marginally employed or unemployed,'' and is well-matched to the signal present in the covariates.

Since counterfactual outcomes are not observed in real data, we evaluate using a hybrid ground truth: for units where the observed treatment matches the AI recommendation ($T_i = \pi(X_i)$), we use the factual outcome $Y_i$, and otherwise we use the model prediction $\hat\mu(X_i, \pi(X_i))$ as a proxy.

\subsubsection{Results}
Figure~\ref{fig:lalonde} shows the FDR and power curves at $c=4{,}000$, averaged over 100 random splits. The empirical FDR is controlled at or below the nominal level $q$ throughout the range $q \in [0.05, 0.65]$ (e.g., FDR $\approx 0.025$ at $q=0.15$, FDR $\approx 0.125$ at $q=0.65$), and then plateaus at the baseline rate of test units with true outcome below the threshold (roughly $23\%$). Power rises smoothly from $0.012$ at $q=0.15$ to $0.34$ at $q=0.65$ and approaches $1$ as $q \to 0.85$. At $q=0.15$ the procedure selects on average $1.9$ individuals per test batch of $\sim 185$, and selected individuals have mean earnings of \textdollar 20{,}780 versus \textdollar 8{,}244 for non-selected (difference \textdollar 12{,}535), and the fraction clearing the threshold is $93.1\%$ among selected versus $76.7\%$ among non-selected ($+16.4$ percentage points). Both WBH and WCS with heterogeneous pruning perform similarly on this dataset, with WCS slightly more conservative at small $q$.

\begin{figure}[!t]
    \centering
    \begin{subfigure}[b]{0.48\linewidth}
        \centering
        \includegraphics[width=\linewidth]{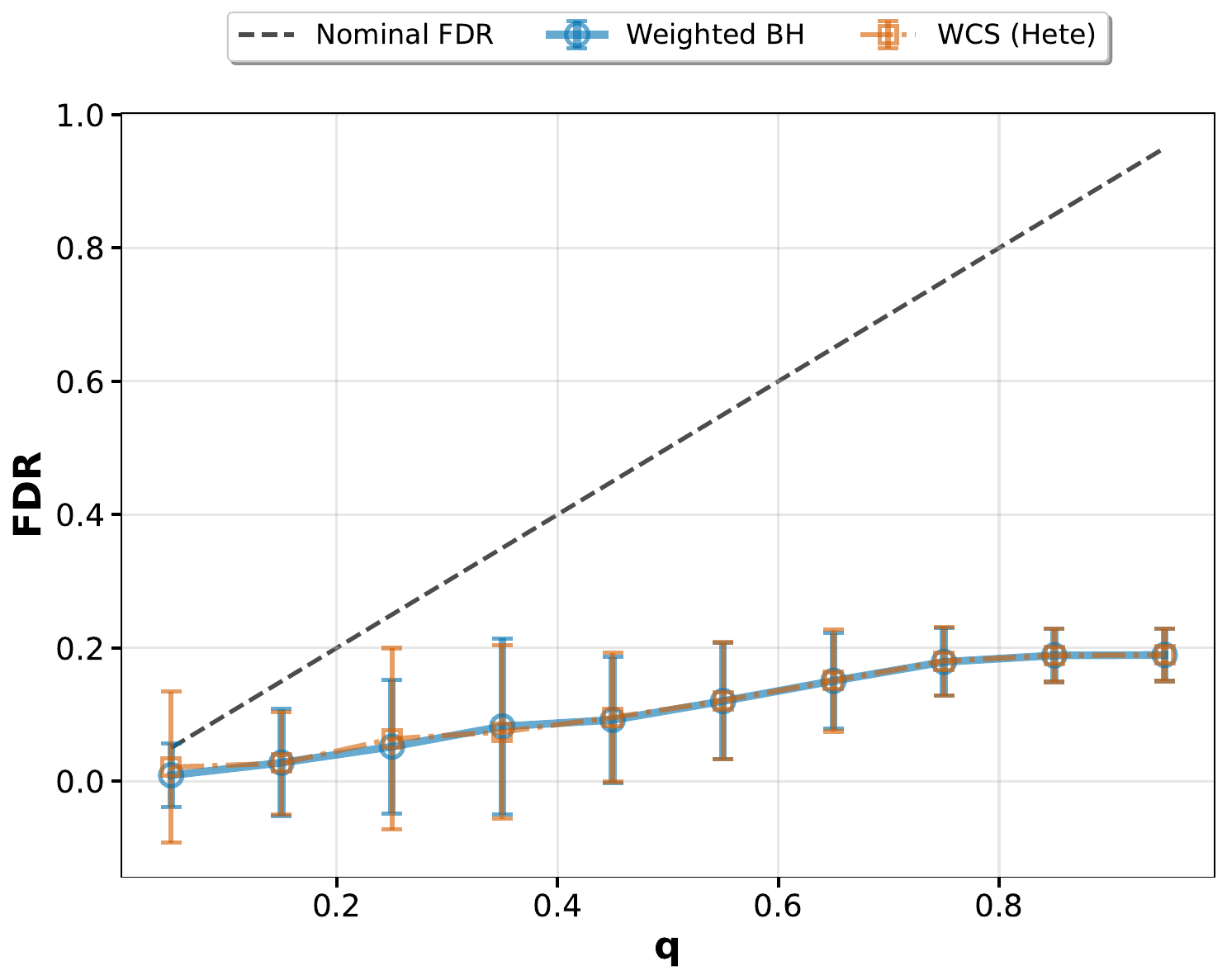}
    \end{subfigure}
    \hfill
    \begin{subfigure}[b]{0.48\linewidth}
        \centering
        \includegraphics[width=\linewidth]{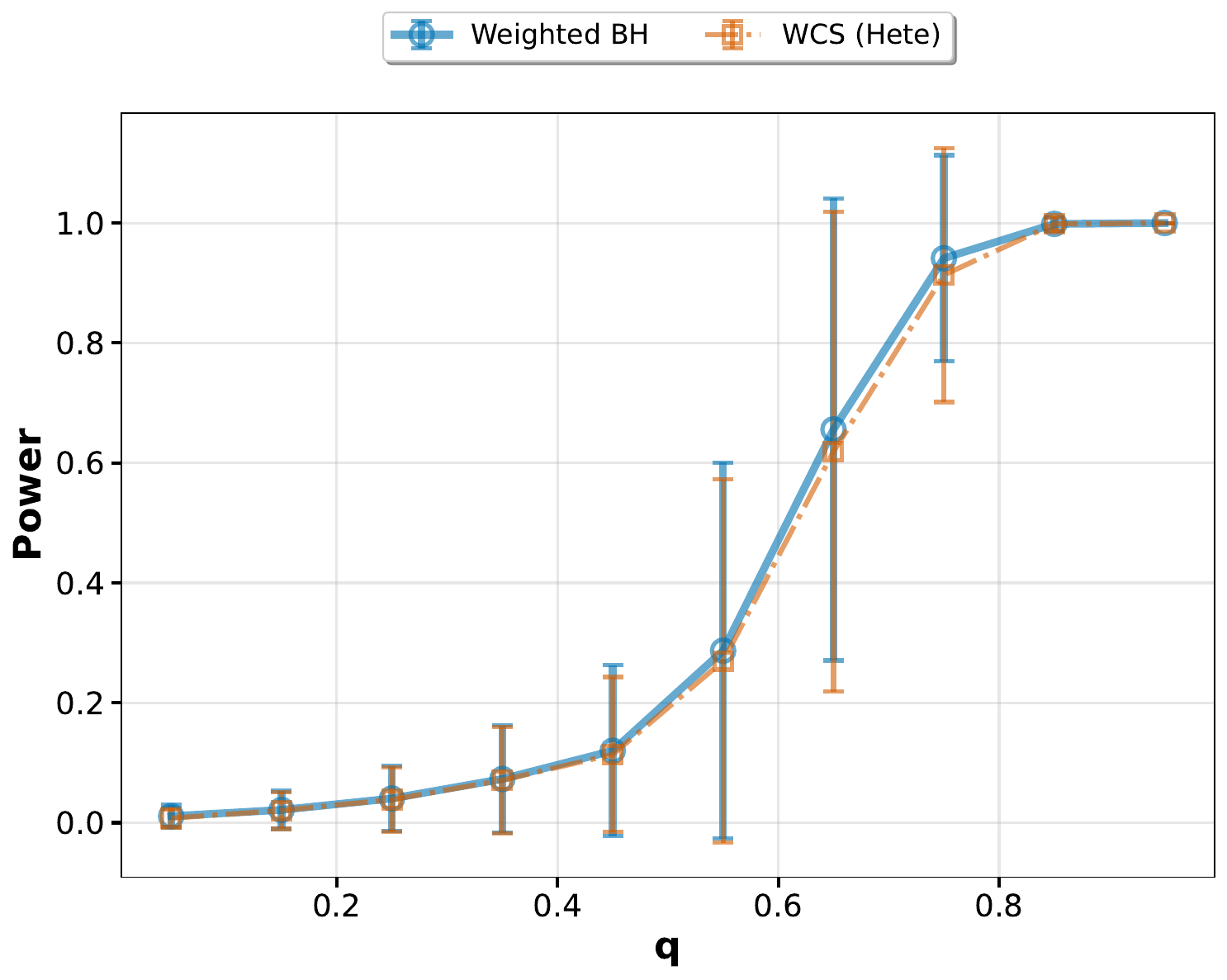}
    \end{subfigure}
    \caption{FDR and power on the LaLonde job training dataset (614 observations, 100 random splits) with threshold $c=4{,}000$. Empirical FDR stays at or below the nominal level $q$ for $q \le 0.65$ and plateaus near the baseline negative rate ($\approx 0.23$) as $q$ approaches $1$. Power increases smoothly with $q$, reflecting the trade-off the user controls by choosing the FDR budget.}
    \label{fig:lalonde}
\end{figure}

\Cref{fig:lalonde_summary} reports the mean earnings for the selected and non-selected groups. When $q$ is small, the average earnings of the selected group are substantially larger than those of the non-selected group. When $q$ is large (near $0.9$), the procedure selects almost everyone, so the mean earnings of the selected group decline toward the overall average. This demonstrates the benefit of selective treatment.

\begin{figure}[!t]
    \centering
    \includegraphics[width=0.9\linewidth]{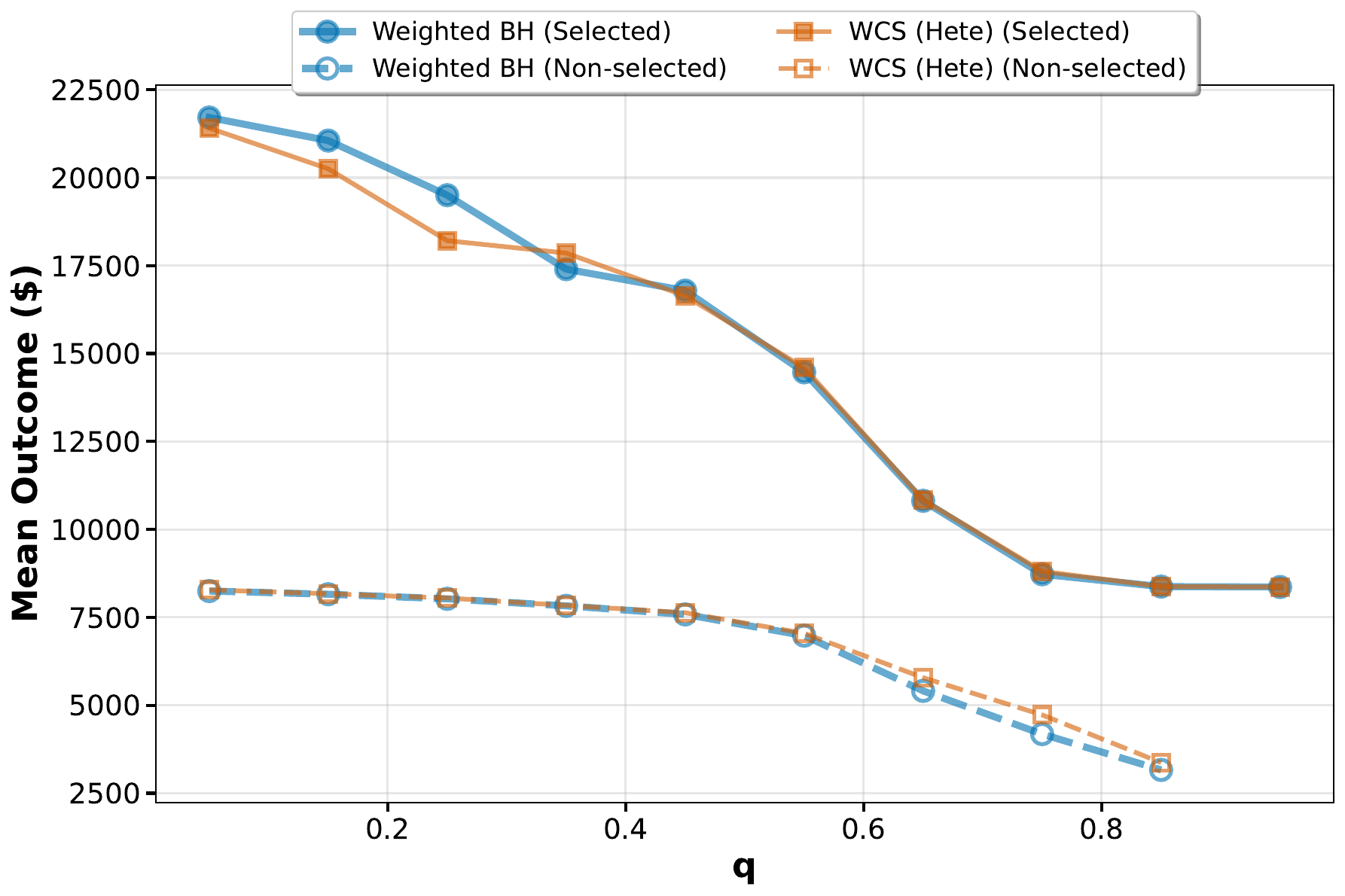}
    \caption{LaLonde job training dataset. Mean post-program earnings (\texttt{re78}) of the selected versus non-selected groups as a function of the nominal FDR level $q$. At small $q$ the selected group has markedly higher mean earnings, and as $q\to 0.9$ almost all units are selected and the two curves converge.}
    \label{fig:lalonde_summary}
\end{figure}

\section{Experimental Details of the Diabetes Experiment}
\label{sec:si-diabetes}

This appendix gives the full details in the diabetes
experiment (\Cref{fig:diabetes} of the main text). We focus on the
clinically high-risk subgroup of the EHR data of \citet{kraus2024data} (age
$\ge 60$, BMI $\ge 30$, $n=5{,}993$ patient-years, $305$ features, $8.3\%$
treated, $36.5\%$ three-year onset rate).

\subsection{Estimating Treatment Effect}
\label{sec:si-confounding}

In this cohort antidiabetics are prescribed reactively, so the treatment
indicator is strongly correlated with imminent onset. The raw three-year avoidance rate
is $3.0\%$ among treated versus $69.0\%$ among untreated. 
Table~\ref{tab:si-cate} reports the mean estimated treatment effect $\hat\tau(x)$ and the fraction of patients with a
\emph{positive} estimated effect for four standard heterogeneous-effect estimators. In every case the estimate is negative for
essentially all patients. Table~\ref{tab:si-overlap} stratifies by estimated
propensity decile and shows that even where treated and untreated patients
overlap (deciles with a balanced treated share), the treated avoid onset far
less often. 
Avoidance means not developing (progressing to) type-II diabetes within the three-year window.

\begin{table}[h]
  \centering
  \caption{Estimated treatment effect on three-year avoidance for four
  heterogeneous-effect estimators (age $\ge 60$, BMI $\ge 30$ subgroup). Every
  estimator returns a negative effect for essentially all patients.}
  \label{tab:si-cate}
  \begin{tabular}{lccc}
    \toprule
    Estimator & mean $\hat\tau$ & frac. $\hat\tau>0$ & max $\hat\tau$ \\
    \midrule
    S-learner (random forest)   & $-0.269$ & $0.0\%$ & $-0.052$ \\
    T-learner (random forest)   & $-0.981$ & $0.0\%$ & $+0.000$ \\
    X-learner (random forest)   & $-0.994$ & $0.0\%$ & $+0.000$ \\
    Causal Forest (DML)         & $-0.570$ & $0.0\%$ & $-0.304$ \\
    \bottomrule
  \end{tabular}
\end{table}

\begin{table}[h]
  \centering
  \caption{Three-year avoidance rate of treated vs.\ untreated patients within
  estimated-propensity deciles. Even in balanced (high-propensity) deciles the
  treated fare far worse, confirming the harm is not an overlap artifact.}
  \label{tab:si-overlap}
  \begin{tabular}{lcccc}
    \toprule
    Propensity decile & $n_{\text{treated}}$ & avoid & $n_{\text{control}}$ & avoid \\
    \midrule
    4  & $15$  & $13.3\%$ & $584$ & $72.1\%$ \\
    6  & $30$  & $3.3\%$  & $570$ & $68.4\%$ \\
    8  & $58$  & $5.2\%$  & $541$ & $65.4\%$ \\
    10 & $250$ & $1.6\%$  & $350$ & $57.7\%$ \\
    \bottomrule
  \end{tabular}
\end{table}

\subsection{Semi-synthetic ground truth}
\label{sec:si-dgp}

Because the counterfactual outcome $Y(\pi(X))$ is never observed, the true FDR
cannot be measured on the raw data (a factual/model-proxy hybrid inflates it).
We therefore construct a semi-synthetic data-generating process on the
\emph{real} covariates. A causal forest ($300$ trees,
minimum leaf $20$, with random-forest first stages and $3$-fold cross-fitting)
fit on the full subgroup provides the conditional treatment effect
$\tau(x)$. The control response curve 
$\mu_0(x)=\Pr(\tilde Y=1\mid x, T=0)$ is fit by a random forest on the untreated
units, and $\mu_1(x)=\mathrm{clip}(\mu_0(x)+\tau(x),\,0.001,\,0.999)$. These
oracle surfaces satisfy $\bar\mu_0=0.686$, $\bar\mu_1=0.118$, $\bar\tau=-0.568$, matching \S\ref{sec:si-confounding}. Continuous potential
outcomes are drawn per replication as
\[
  Y(t) = \mu_t(X) + \varepsilon,\qquad \varepsilon\sim\mathcal N(0,\sigma^2),\quad \sigma=0.10,
\]
and the observed outcome is $Y^{\mathrm{obs}}=T\,Y(1)+(1-T)\,Y(0)$ with the
factual treatment $T$. Because both potential outcomes are available, the realized outcome $Y(\pi(X))$ under any policy can be computed.

\subsection{AI policy and selection procedure}
\label{sec:si-policy}

The AI policy is the standard clinical rule of treating the population with the highest-risk. On each training fold we fit a random-forest onset-probability model
$\widehat{\Pr}(\text{onset}\mid x)$ and set
$\pi(x)=\mathbb{I}\{\widehat{\Pr}(\text{onset}\mid x)\ge \hat\kappa\}$, where
$\hat\kappa$ is the $90$th training percentile. This deploys treatment to
$\approx 7$--$10\%$ of the test units. The nonconformity score uses a
separate random-forest S-learner $\hat\mu(x,t)$. Calibration scores are
$V_i=Y^{\mathrm{obs}}_i-\hat\mu(X_i,T_i)$ and test scores are
$\widehat V_{n+j}=c-\hat\mu(X_{n+j},\pi(X_{n+j}))$, with $c$ the median of the
realized policy outcome $Y(\pi)$ over the test set. Policy-shift weights are
$w_i=\mathbb{I}\{T_i=\pi(X_i)\}/\hat\pi_0(T_i\mid X_i)$ with an $L_2$-regularized
logistic propensity clipped to $[0.05,0.95]$, as in \Cref{eq:weighted_pval}. We
draw $20$ random $30\%/60\%/10\%$ train/calibration/test splits stratified by
treatment and run weighted Benjamini--Hochberg selection on the weighted conformal
$p$-values over $q\in[0.05,0.95]$. 

\subsection{Safety metric}
\label{sec:si-safety}

The AI policy is deployed only to the selected set $\mathcal S$, and every other unit
receives the status quo. A \emph{harmful deployment} is a selected unit that the
policy would treat, $\{j\in\mathcal S:\pi(X_{n+j})=1\}$. Since treatment is
harmful throughout this subgroup, each such unit has a realized
outcome under the deployed policy that is degraded. Because these units have a low
predicted outcome under treatment, their test scores are large and they are
excluded from $\mathcal S$ until $q$ is near one. The right panel of
\Cref{fig:diabetes} reports the number of harmful deployments as a fraction of
the $\sim 44$ treatments per batch for the policy. It is zero for all
$q\le 0.85$ and reaches only $\sim 2$ units at $q=0.95$. Because FDR control
bounds this count of harmful deployments by $q\,|\mathcal S|$, controlling the
false-discovery rate makes the safety guarantee adaptive and no prior
knowledge that treatment is harmful here is required.

\section{Diabetes Experiment with Heterogeneous Treatment Effects}
\label{sec:si-hetero}

In \Cref{sec:si-safety}, treatment is harmful throughout the
subgroup, which makes the safety gate protective but degenerate since the ideal selected set is empty. 

To show that the gate discriminates when
the effect is genuinely heterogeneous, we build another semi-synthetic study
on the same real electronic-health-record covariates but inject a treatment
effect that is beneficial for some patients and harmful for others.

We keep the observed covariates and fit a control response curve for the outcome under no
treatment. We then inject a heterogeneous effect
$\tau(X)=a\,(z(X)-\delta)$, where $z(X)$ is a fixed standardized covariate
and $\delta$ is set so that roughly half of the patients have a positive effect.
The treated potential outcome is the control surface plus $\tau(X)$. To ensure
overlap, we simulate treatment from a
well-behaved behavior policy with propensities clipped to $[0.15,0.85]$. The
deployed AI policy is a T-learner estimate of the effect thresholded at zero, so
it treats the patients it predicts will benefit. 
A causal forest fit on the full
generated data serves as the ground-truth effect, and a treated unit is
counted as a harmful deployment when it is selected, and
has a true negative effect. We run $20$ repetitions with the same weighted
Benjamini--Hochberg selection as in \Cref{sec:si-safety}.

\begin{table}[t]
\centering
\caption{Heterogeneous safety gate on real covariates, averaged over $20$
repetitions. The policy treats about half of the cohort.
The realized FDR stays below the nominal $q$.}
\label{tab:si-hetero}
\begin{tabular}{lccccc}
\hline
$q$ & FDR & Power & Frac.\ treated & Beneficial treated & Harmful deployed \\
\hline
$\le 0.35$ & $0.00$ & $0.00$ & $0.00$ & $0$ & $0$ \\
$0.45$ & $0.22$ & $0.85$ & $0.49$ & $76.5$ & $24.4$ \\
$0.55$ & $0.29$ & $1.00$ & $0.49$ & $91.4$ & $30.5$ \\
$\ge 0.65$ & $0.29$ & $1.00$ & $0.49$ & $91.4$ & $30.5$ \\
\hline
\end{tabular}
\end{table}

\Cref{tab:si-hetero} summarizes the result. The AI policy 
treats about $49\%$ of the cohort. The realized FDR  never exceeds the target $q$. At
every level the selection results in far more beneficial treated units than harmful
deployments. When the effect is heterogeneous the conformal selection lets
the beneficial treatments pass while bounding the harmful ones. \Cref{fig:si-hetero} shows the corresponding FDR,
power, and safety curves.

\begin{figure}[t]
\centering
\includegraphics[width=\linewidth]{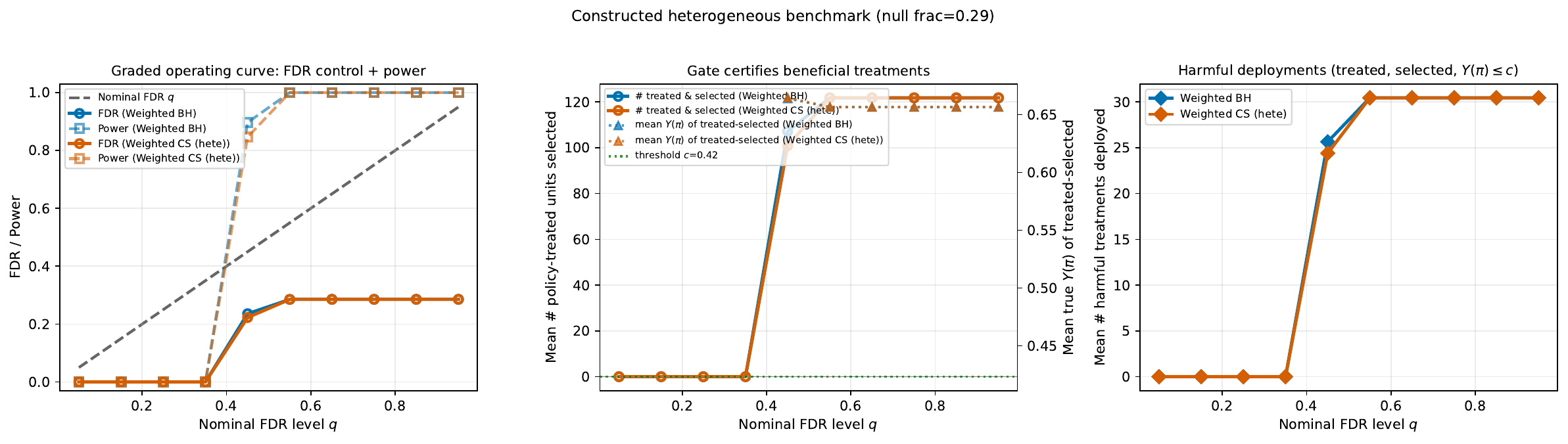}
\caption{Heterogeneous safety gate. Left, empirical FDR against the nominal level
with the diagonal for reference. Middle, selection power. Right, the counts of
beneficial treated units and harmful deployments as functions of $q$.}
\label{fig:si-hetero}
\end{figure}

\end{document}